\documentclass[12pt]{article}
\usepackage[utf8]{inputenc}
\usepackage{amsmath,amssymb,amsthm,amsfonts,amscd, bm}
\usepackage[margin = 1in]{geometry}
\usepackage{setspace}
\usepackage{xcolor}
\usepackage{amsmath,amssymb,amsthm,bbold,bm}
\usepackage{booktabs}
\usepackage{adjustbox}
\usepackage{tikz-network}
\PassOptionsToPackage{hyphens}{url}
\usepackage[colorlinks,breaklinks,   
     bookmarks=false,
     pdfstartview=Fit,  
     pdfview=Fit,       
     linkcolor=aspurple,
     urlcolor=aspurple,
     citecolor=aspurple,
     anchorcolor = aspurple,
     hyperfootnotes=false
         ]{hyperref}
\usepackage[font = normalsize, labelsep = period, skip=5pt, labelfont = bf]{caption}
\usepackage{cleveref}
\usepackage{authblk}
\usepackage{natbib}

\input{glyphtounicode}\pdfgentounicode=1
\definecolor{aspurple}{RGB}{51,9,128} 
\onehalfspacing
\setlength{\tabcolsep}{3pt} 
\setlength\textfloatsep{10pt} 
\newtheorem{theorem}{Theorem}
\newtheorem{proposition}[theorem]{Proposition}
\theoremstyle{definition}

\newcommand{\figtext}[1]{
	\captionsetup{justification=justified,font=footnotesize, margin=0pt}
	\caption*{\hspace{0pt}\hangindent=0em #1}
	}

\newcommand{\Fignote}[1]{\figtext{\textit{Note:~}~#1}}
\def\sym#1{\ifmmode^{#1}\else\(^{#1}\)\fi}
\newtheorem*{assumption*}{\assumptionnumber}
\providecommand{\assumptionnumber}{}
\makeatletter

\makeatother

\bibliographystyle{econ-aea}
\numberwithin{equation}{section}
\begin{document}

\renewcommand{\thefootnote}{\fnsymbol{footnote}}
\begin{titlepage}
    \centering
    \vspace*{1cm}
    \textbf{\huge Inflation in Disaggregated Small Open Economies\footnote{I am deeply indebted to \c{S}ebnem Kalemli-\"{O}zcan,  Pierre de Leo and Thomas Drechsel for their invaluable encouragement, guidance, and support. I am also profoundly grateful to Luna Bratti, Julian di Giovanni, Jorge Miranda-Pinto, Felipe Saffie, and Muhammed A. Y\i ld\i r\i m for their support, patience, and feedback. I would like to thank Boragan Aruoba, Sina Ates, Colin Hottman, Guido Kuersteiner, Robbie Minton, John Shea, Luminita Stevens, as well as various seminar participants at the Bank of Canada, Universidad de Chile, Pontificia Universidad Cat\'olica de Chile, Federal Reserve Board, Federal Reserve Bank of Boston, and the University of Maryland, for helpful comments and discussions.   All errors are my own. The views expressed herein are those of the authors and not necessarily those of the Federal Reserve Bank of Boston or any other person affiliated with the Federal Reserve System. Email: \href{asilvub@gmail.com}{asilvub@gmail.com}. First version: October 27, 2023.}}
    
    \vspace{0.5cm}
    \large \textbf{Alvaro Silva} \\
    \small Federal Reserve Bank of Boston\\
    \small \href{asilvub@gmail.com}{asilvub@gmail.com}
    
    \vspace{0.5cm}
    
    \normalsize \today \\
    \small \href{https://asilvub.github.io/assets/papers/AS_JMP.pdf}{\textbf{Click here for the latest version}}
    \vspace{0.5cm}
    \begin{singlespace}
    \begin{abstract}
        \noindent This paper studies inflation in small open economies with production networks. I show that the production network alters the elasticity of the consumer price index (CPI) to changes in sectoral technology, factor prices, and import prices. Sectors can import and export directly but also indirectly through domestic intermediate inputs. Indirect exporting dampens the inflationary pressure from domestic forces, while indirect importing increases the inflation sensitivity to import price changes. Computing these CPI elasticities requires knowledge of the production network structure as these do not coincide with typical sufficient statistics used in the literature, such as sectoral sales-to-GDP ratios, factor shares, or imported consumption shares. Using input-output tables, I provide empirical evidence that adjusting CPI elasticities for indirect exports and imports matters quantitatively for small open economies. I use the model to illustrate the importance of production networks during the recent COVID-19 inflation in Chile and the United Kingdom.\\[0.25in]


        \noindent \emph{JEL codes: E31, F41, D57, C67, F14, L16}

        \noindent \emph{Keywords: Inflation, Small Open Economies, Networks, Input-Output Tables}
        \end{abstract}
    \end{singlespace}
\end{titlepage}
\renewcommand{\thefootnote}{\arabic{footnote}}

\newpage
\clearpage
\pagenumbering{arabic} 
\section{Introduction}

In 2022, inflation reached 8 percent in the United States, its highest level in 40 years. The picture was similar on the other side of the Atlantic: Euro Area inflation was 8.4 percent, the highest since its creation.  Explanations include shocks to commodity prices \citep{BB23a, GG23}, sectoral demand changes \citep{FGI22}, fiscal stimulus \citep{BFM23, dGKOSY23}, and supply chain disruptions \citep{dGKOSY22, dGKOSY23b, CJJ23}. As shown in Figure \ref{fig:inflation}, high inflation was not restricted to these two economies: the median small open economy experienced an inflation rate of around 10 percent in 2022. However, inflation in this group of countries has been less studied during the current episode. This paper attempts to fill this gap using both theory and data. 

\begin{figure}[htbp!]
    \centering
        \caption{CPI Inflation in Small Open Economies.}
    \label{fig:inflation}
    \includegraphics[scale = 0.6]{./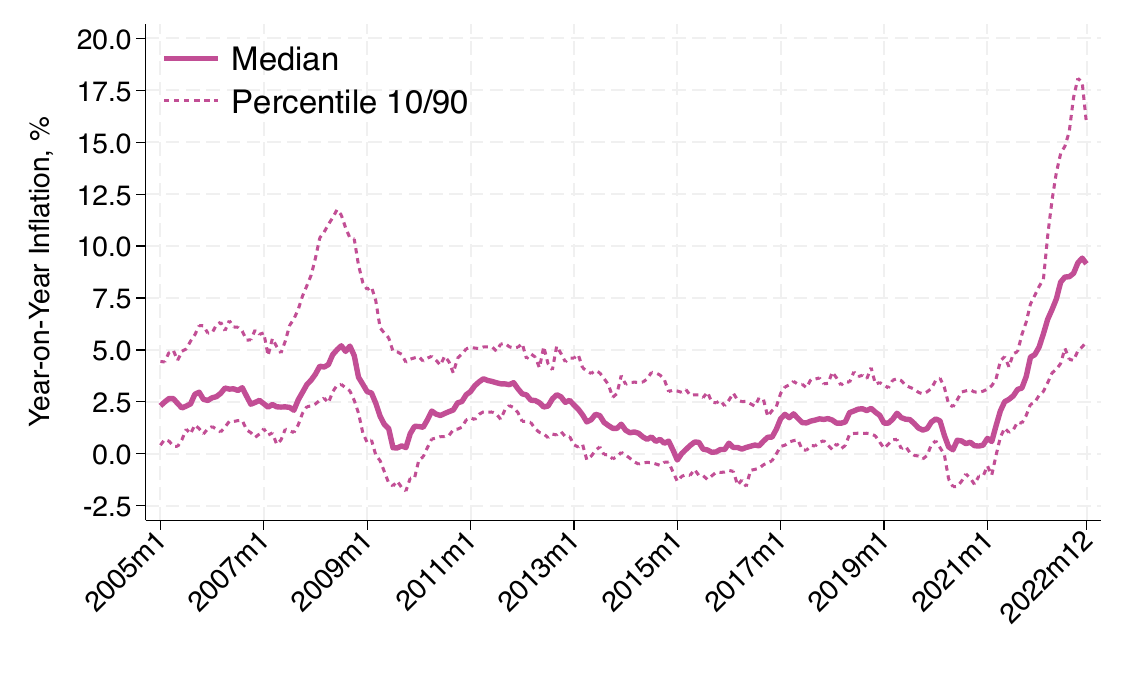}
    \Fignote{The figure shows the median inflation rate (solid lines) and 10th and 90th percentile (dashed lines). Small open economies are economies that represent less than 5 percent of world GDP and have a trade openness larger than 30 percent of GDP. See section \ref{subsec:data} for more details.  Plot shows an unbalanced panel of 46 small open economies over time. Source: Bank for International Settlements.}
\end{figure}




My starting point is the multi-sector and multi-factor production network closed economy model in \cite{BF19}. It provides a useful benchmark to analyze inflation during macroeconomic shocks such as COVID-19, a combination of sectoral and aggregate shocks. Given my focus on small open economies,  I augment this model to feature imports and exports at the sectoral level, adapting the production network model to the small open economy case. I use the model to study how the consumer price index (CPI) reacts to changes in sectoral technology, factor prices, and import prices, going from the micro to the macro level. 

I show that openness and production networks affect our understanding of inflation in small open economies via two distinct channels. On the one hand, exporting, either directly or indirectly through other economic producers, dampens the effect of sectoral technology shocks and factor price changes relative to a closed economy. On the other hand, direct importing gives rise to the problem of imported inflation as the domestic consumer's basket now contains imported goods. On top of this channel, production networks imply that domestic goods are manufactured using imported inputs indirectly. As a result, production networks amplify imported inflation.\footnote{This channel is distinct from inflation resulting from imported intermediate goods, as models can have intermediate goods without intersectoral linkages. See \cite{Svensson00} for an early analysis of imported inflation via intermediate goods.} Uncovering these effects and quantifying their importance is only possible when both openness and network linkages are explicitly considered.

The key economic intuition is that opening up the economy is one of the ways to break the link between what the country produces and what is consumed by domestic consumers. In an efficient closed economy --- an economy without distortions --- with intersectoral linkages and domestic final consumption only, everything produced is consumed by the domestic consumer. \emph{Network-adjusted domestic consumption}, by which I mean domestic consumption adjusted by domestic production network linkages, is thus equivalent to sales in the closed economy.\footnote{This definition is deliberately reminiscent of the network-adjusted labor share introduced in \cite{Baqaee15}.} That domestic households consume everything produced, directly or indirectly, is one of the key building blocks of why the production network structure is irrelevant to first-order for macroeconomic outcomes such as real GDP or welfare in closed economies. This irrelevance result is a useful benchmark. It allows us to use the ratios of sectoral sales to nominal GDP  (the so-called ``Domar weights") and factor payments to nominal GDP (factor shares)  as sufficient statistics for the pass-through of sectoral technology changes or factor price changes to the CPI, respectively.\footnote{As I show in the theory section, this can be thought of as a corollary of Hulten's theorem \citep{Hulten78} but for the CPI rather than for real GDP. Recall that for real GDP, Hulten's theorem states that in an efficient closed economy with inelastic factor supplies, the first-order effect of sectoral technology on real GDP is given by the Domar weights, and the first-order effect of changes in factor supply is given by its factor share. } Increases in sectoral technology decrease consumer inflation by the Domar weight of the sector, while increases in factor prices increase inflation by the factor share.


I show that this irrelevance result no longer holds for consumer inflation in small open economies, without the need for second-order approximations, as in \cite{BF19}, or distortions, as in \cite{BF20} and \cite{BL20}.\footnote{Strictly speaking, those papers I cited sought ways to break Hulten's theorem in closed economies, which refer to quantities, meaning the effect of sectoral technology changes or distortions on real GDP. However, since, as a corollary of Hulten's theorem, we can back out changes in CPI, I referenced them here.} The reason is as follows. Consider first the impact of sectoral technology shocks on the CPI. In a small open economy, there are two final uses for goods produced within borders: domestic consumption or exports. Unlike the closed economy case, sectoral sales do not map to the network-adjusted domestic consumption for two reasons: (i) direct exports and (ii) indirect exports through domestic production network linkages. Instead, sectoral sales map to network-adjusted domestic consumption \emph{plus} network-adjusted exports. Since what matters for the CPI is network-adjusted domestic consumption, one must subtract network-adjusted exports from sectoral sales. Hence, relative to a closed economy, a consumer is weakly less exposed to changes in sectoral technology. Importantly, we require knowledge of the domestic production network structure to compute these network-adjusted domestic consumption measures. 


A similar intuition holds for how factor price changes affect CPI. The relevant statistic here is the \emph{network-adjusted domestic factor share}: how much of each factor is embedded in goods consumed domestically after considering domestic production network linkages (in the spirit of the domestic factor demand concept in \cite{ACCDP22}). The total amount of a factor available in the economy can be ``consumed" by domestic or foreign consumers,\footnote{Here, consumers do not directly consume factors but goods. Given that goods are ultimately made of factors of production, we can think of consumers implicitly consuming them. This notion can be found in the reduced factor demand system proposed by \cite{ACD17}.} with the production network potentially reshaping these patterns. While factor shares are sufficient statistics in the closed economy, in the small open economy with production networks we need to subtract from factor shares the fraction of each factor that is exported either directly or indirectly via production networks. This means that relative to a closed economy, the domestic consumer is weakly less exposed to changes in factor prices. 

The effect of import prices on the CPI, on the other hand, is amplified in a small open economy with production networks relative to a small open economy without production networks. The relevant statistics here are \emph{network-adjusted import consumption shares}. Since the domestic consumer directly imports, the direct consumption share captures part of the exposure to import price changes. However, if there are intersectoral linkages across producers, domestic good producers may end up importing intermediate inputs either directly, by buying from abroad, or indirectly, by buying from domestic sectors that buy from abroad or that buy from sectors that buy from abroad, and so on. This means that the imported content of domestically produced goods increases in the presence of production networks. To the extent that domestic goods increase their reliance on imported intermediate goods, so does the domestic consumer. Thus, the domestic consumer's exposure to import prices must account for both direct and indirect exposure, which are encapsulated in the \emph{network-adjusted import consumption shares}.




Guided by the model, I turn to the data to measure the importance of these production network adjustments. I find that these adjustments matter quantitatively using data from the World Input-Output Tables. I illustrate these adjustments by focusing on the three sources of variation I have considered so far: sectoral technology, factor prices, and import prices. 

First, consider the electricity sector in the United Kingdom (UK). The Domar weight of this sector is around 5.95 percent. Once we consider direct exports (but not indirect exports), the relevant ratio for the pass-through to CPI decreases to 5.90 percent, a negligible change. This is expected, as the UK electricity sector hardly exports directly to other countries. Yet, when considering indirect exporting, the network-adjusted domestic consumption share decreases to 4.4 percent, a 25 percent decrease relative to the Domar weight benchmark. This is because other export-heavy sectors in the UK use electricity as a production input either directly or indirectly. Thus, Domar weights would overestimate the impact of a change in productivity in the UK electricity sector on the domestic CPI.

Second, consider the role of wage changes in the CPI. In a closed economy, the labor share is the relevant statistic for how wage changes pass through to the CPI. In the data, the labor share for the average small open economy is around 57 percent. However, the small open economy model with production networks suggests that we need to subtract from the labor share the portion that is exported directly or indirectly. After accounting for network-adjusted exports, this average labor share decreases to 39 percent. This means that the same increase in domestic wages has a 32 percent lower impact on inflation in a small open economy relative to a closed economy. 

Finally, let me consider the role of import prices.  In the data, the average small open economy exhibits a direct import consumption share of around 17 percent of its total expenditure. Yet, on average, the network-adjusted import consumption share is 30 percent. This implies that the impact of import prices on domestic inflation is (almost) twice what would be implied by a measure ignoring indirect linkages.

In the last section of the paper, I use the model to analyze the recent inflation in two small open economies: Chile and the United Kingdom. I chose these two countries as (i) they fit into the small open economy definition, (ii) they have experienced high inflation in recent years, and (iii) they allow me to compute and contrast between emerging and developed markets. Using these countries,  I show that network adjustment on exports and imports provides a quantitative improvement in matching data moments over both closed economy models and small open economy models without production networks. 

Between 2020 and 2022, the average annual inflation rate in Chile was 6.13 percent, with a standard deviation of 3.89. A quantitative closed economy model with production networks implies an average inflation rate of 0.98 percent with a standard deviation 9.69. A small open economy model without production networks delivers an average inflation rate of 1.45 percent with a standard deviation 6.88. Finally, the small open economy model with production networks delivers an average inflation rate of 2.41 with a standard deviation 6.67.  Overall, the small open economy with production networks better matches the mean and the standard deviation.

For the United Kingdom, average inflation rate was 3.69 percent over the same period, with a standard deviation of 3.11. The closed economy model with production networks implies an average inflation rate of 2.27 percent with a standard deviation 2.57. The small open economy model without network adjustments exhibits an average inflation rate of 2.72 percent with a standard deviation of 2.64. The small open economy model with production networks shows an average inflation rate of 3.21 percent with a standard deviation 3.00. As in the Chilean case, the production network coupled with openness helps to get closer to the date moments of United Kingdom inflation.

The measurement and application sections illustrate that the required network adjustments in a small open economy matter for our understanding of inflation not only as a theoretical curiosity but also in practice.

\paragraph{Related Literature.} This paper relates to several strands of the literature. The first one studies inflation in closed economies with production networks \citep{Basu95, LTS22, GLSW22, BF22,  LV23, AB22, FGI22, Rubbo23, MW23, dGKOSY23b, dGKOSY23, LW23}.\footnote{There is also extensive literature on multisector models with sticky prices that do not necessarily feature a production network structure, so I omit them from the main text. For earlier contributions, see \cite{Woodford03} and the references therein. } These studies consistently find that the interaction between sectoral price/wage rigidities and production networks is key to understanding the behavior of inflation, which has implications for the conduct of monetary policy, such as what inflation rate to target. This paper focuses on how introducing production networks in a small open economy model helps us to understand the pass-through of different shocks to inflation. Although there is no price rigidity in the model, and thus I cannot speak about the optimal conduct of monetary policy, I contribute to this literature by showing that the production network can have  a first-order impact on inflation beyond its role in the sales share distribution, without the need for any distortions. 

Second, this paper relates to the literature on inflation in small open economies. In the second part of the 20th century, Latin America experienced episodes of high and persistent inflation, a term later coined as ``chronic inflation". In response, there was extensive literature during the 1990s on how to best control chronic inflation and the impact of different nominal and real policy rules in small open economies \cite[see][and especially the last one, for an overview of this earlier literature]{CV95, CRV95, CV99}. Modern treatments that introduce New Keynesian features such as sticky prices and monopolistic competition into small open economy models include \cite{GM05} and \cite{FM08}\footnote{There is also a large literature focusing on two or more countries. My work is not directly related to these models as I focus on small open economies. I refer the interested reader to \cite{CP07} and \cite{CDL10} for an overview of such models. Recent literature focusing on inflation using multi-country and multi-sector models include, for example, \cite{ALS19} and \cite{HSS22} during non-Covid-19 times and \cite{dGKOSY22} and \cite{AAS23} during COVID-19.}. The literature has recently augmented these models to include multiple sectors \citep{Matsumara22} and intersectoral linkages \citep{QWXZ24} and has applied these models to understand the recent inflationary episode in the United States \citep{CJ20, CJJ23}. Relative to this literature, I make four contributions. First, I explicitly analyze the role of production networks on inflation for small open economies. I show how the production network interacts with international trade, affecting how domestic and foreign shocks ultimately affect CPI inflation both theoretically and quantitatively. Second, I show that this result holds without the need of any distortion or friction, which further clarifies the key role of production networks beyond these forces. Third, the fact that I use a first-order approximation allows me to consider unrestricted intersectoral linkages, without assuming any functional forms for production or utility. Finally, my model also features multiple factors of production, while the previous models typically focus on only one factor, labor.


Finally, in focusing on the role of network-adjusted exports and imports, this paper contributes and connect inflation to the literature on indirect trade \citep{Huneeus18, ACCDP22, DKMT21,  DKMT23, Munoz23}. This literature focuses on the firm-level real consequences of indirect trade, which is equivalent to my trade network-adjustments.  For example, \cite{DKMT21} use Belgian firm-to-firm level transaction data and find that the relevant concept for a firm sales' exposure to international markets is total exports (network-adjusted exports), while its exposure in costs is total imports (network-adjusted import share). Importantly, the focus of the trade literature is on real outcomes, while the focus of this paper is on a nominal variable. Hence, my contribution to this literature is to embed indirect trade into a small open economy model to analyze how it matters for a nominal variable, inflation.



\section{A Small Open Economy Model with Production Networks} \label{sec:static_model}



\textbf{Environment.} There is a set of domestically produced goods that I denote by $N$ with typical element $i$. These goods can be consumed domestically, used as intermediate inputs by other domestic sectors, and exported. I denote the imported goods set by $M$, with typical element $m$. These imported goods can be used as intermediate inputs to produce domestic goods or as final consumption. Finally, there is a set $F$ of factors with typical element $f$.

\paragraph{Notation.} I denote matrices and vectors using \textbf{bold} i.e., $\bm{Y}$. I denote the transpose of a matrix as $\bm{Y}^T$. Unless otherwise noted, vectors are always column vectors. For example, the vector of Domar weights, defined below, is $\bm{\lambda} = (\lambda_1, \lambda_2,..., \lambda_N)^T$. Log changes are expressed as $\mathrm{d}\log Y = \hat{Y}$.

Table \ref{tab:definitions} shows the different shares and matrices that are key for the analysis. I use a bar over a variable for shares based on \emph{total expenditure}, while GDP-based measures do not contain a bar.

\begin{table}[h!]
    \caption{Definitions}
    \renewcommand{\arraystretch}{1.7}
\label{tab:definitions}
    \centering
\begin{adjustbox}{max width = 0.9\textwidth}
\begin{tabular}{l c c c}
\toprule 
Name & Notation &  Expression & Goods/Factors\\
\midrule 
\emph{GDP-based}\\
Domar Weight & $\lambda_i$ & $\frac{P_iQ_i}{GDP}$ & for  $i  \in N$ \\
Consumption Share & $b_i$ & $\frac{P_iC_i}{GDP}$ & for  $i  \in N$\\
Imported Consumption Share & $b_m$ & $\frac{P_mC_m}{GDP}$ & for  $m\in M$\\
Export Share & $x_i$ & $\frac{P_iX_i}{GDP}$ & for  $i \in N$ \\
Factor Shares & $\bm{\Lambda}$ & $\Lambda_f = \frac{W_fL_f}{GDP}$ & for  $f\in F$\\
\\
\emph{Expenditure-based}\\
Domar Weight & $\bar{\lambda}_i$ & $\frac{P_iQ_i}{E}$ & for  $i  \in N$ \\
Consumption Share & $\bar{b}_i$ & $\frac{P_iC_i}{E}$ & for  $i  \in N$\\
Imported Consumption Share & $\bar{b}_m$ & $\frac{P_mC_m}{E}$ & for  $m \in M$\\
Export Share & $\bar{x}_i$ & $\frac{P_iX_i}{E}$ & for  $i \in N$ \\
Factor Shares & $\bar{\bm{\Lambda}}$ & $\bar{\Lambda}_f = \frac{W_fL_f}{E}$ & for  $f\in F$\\
\\
\emph{Sector-level Shares}\\
Input-Output Matrix & $\bm{\Omega}$ & $\Omega_{ij}= \frac{P_jM_{ij}}{P_iQ_i}$ &  $j \in N$\\
Leontief-Inverse Matrix & $\bm{\Psi}_D = (\bm{I} - \bm{\Omega})^{-1}$ & $\bm{\Psi}_{ij} = \sum\limits_{s=0}^\infty \bm{\Omega}^s_{ij}$&    $i,j \in N$\\
Factor Spending Matrix & $\bm{A}$ & $a_{if} = \frac{W_fL_{if}}{P^D_i Q_i}$ & $i\in N; f \in F$\\
Intermediate Import Spending Matrix & $\bm{\Gamma}$ & $\Gamma_{im} = \frac{P_mM_{im}}{P_iQ_i}$ & $i \in N; m \in M$ \\
\bottomrule 
\end{tabular}
\end{adjustbox}

\end{table}

\subsection{Households} 
There is a representative household that consumes domestically produced goods and foreign goods. It has an instantaneous utility function that I denote by $U(\bm{C}_D, \bm{C}_M)$, where $\bm{C}_D = \lbrace C_i\rbrace_{i \in N}$ denotes the vector of domestically-produced goods consumption and $\bm{C}_M = \lbrace C_m\rbrace_{m \in M}$ is the vector of foreign goods consumption. These consumption vectors have associated vector prices  $\bm{P}_D = \lbrace P_i\rbrace_{i \in N}$ and $\bm{P}_M = \lbrace P_m\rbrace_{m\in M}$. Unless otherwise stated, all prices are denominated in local currency. I assume the utility function $U(.)$ is homogeneous of degree one in its arguments. The representative consumer also owns all factors of production and supplies them inelastically ($\lbrace \bar{L}_f\rbrace_{f\in F}$) at the given factor prices ($\lbrace W_f \rbrace_{f\in F}$).

Given a vector of prices, both domestically produced and foreign goods, the cost-minimization problem satisfies
\begin{align}
   PC =  \min_{\bm{C}_D, \bm{C}_M}  \sum\limits_{i\in N}P_iC_i + \sum\limits_{m\in M} P_mC_m\text{ subject to } U(\bm{C}_D, \bm{C}_M)\geq \bar{U}.
\end{align}

Solving this problem delivers a price index that is a function of good prices. I denote this price index by $P = P(\bm{P}_D, \bm{P}_M)$. As a reminder, notice that up to a first-order approximation, changes in this price index satisfy 
\begin{align}
    \widehat{P} &= \bar{\bm{b}}_D^T \widehat{\bm{P}}_D + \bar{\bm{b}}_M^T \widehat{\bm{P}}_M,
\end{align}
where 
\begin{align*}
    \bar{\bm{b}}_D = \lbrace \bar{b}_i\rbrace = \frac{P_iC_i}{E}; \quad  \bar{\bm{b}}_M = \lbrace \bar{b}_m\rbrace = \frac{P_mC_m}{E}; \quad E = \bm{P}_D^T \bm{C}_D + \bm{P}_M^T \bm{C}_M = PC,
\end{align*}
are the expenditure share on domestically produced goods ($\bar{b}_i$), imported goods ($\bar{b}_m$), and total expenditure ($E$), respectively.

The consumer budget constraint reads 
\begin{align*}
      PC + T  &= \sum\limits_{f\in F} W_f L_{f} + \sum\limits_{i\in N} \Pi_i,  
\end{align*}
where $T$ is an \emph{exogenous} net transfer to the rest of the world as in \cite{BF22}. In Appendix \ref{app:two_period}, I provide a justification for having such a force in the current model using a two-period model without changing the main results.

\subsection{Sectors}
There is a representative firm in each $i$ sector that produces according to the following production function 
\begin{align}
    Q_i &= Z_i F^i\left(\lbrace L_{if}\rbrace_{f\in F}, \lbrace M_{ij}\rbrace_{j\in N}, \lbrace M_{im}\rbrace_{m\in M}\right),
\end{align}
where $Z_i$ is a sector-specific productivity, $L_{if}$ is demand for factor $f$ by firm $i$, $M_{ij}$ represents intermediate input demand for good $j \in N$ by firm $i$, and $M_{im}$ represents input demand for imported good $m\in M$. We can write cost-minimization firm $i$ as 
\begin{align*}
    TC_i = \min_{\lbrace L_{if}\rbrace_{f=1}^F, \lbrace M_{ij}\rbrace_{j\in N}, \lbrace M_{im}\rbrace_{m\in M}} &\sum\limits_{f\in F} W_f L_{if} + \sum\limits_{j\in N} P_j M_{ij} + \sum\limits_{m\in M} P^M_m M^M_{im}\\
    & \text{ subject to } Z_i F^i\left(\lbrace L_{if}\rbrace_{f\in F}, \lbrace M_{ij}\rbrace_{j\in N}, \lbrace M_{im}\rbrace_{m\in M}\right)\geq \bar{Q}_i.
\end{align*}

This delivers a marginal cost function that only depends on prices and technology due to the constant returns to scale assumption. In particular, 
\begin{align}
    MC_i = MC_i(Z_i, \bm{P}_D, \bm{P}_M, \bm{W}),
\end{align}
where $\bm{W} = \lbrace W_f\rbrace_{f\in F}$ is a vector of factor prices. 

We can get conditional factor and intermediate input demand by applying Shephard's lemma to the optimized total cost, $TC_i$, such that 
\begin{align}
    \frac{\partial MC_i}{\partial W_f}Q_i&=L_{if}\quad \text{ for each } f \in F,\\
    \frac{\partial MC_i}{\partial P_j}Q_i&=M_{ij} \text{ for each } j \in N,\\
    \frac{\partial MC_i}{\partial P_m}Q_i&=M_{im} \text{ for each } m \in M.
\end{align}
Due to constant returns to scale and perfectly competitive good and factor markets, each firm $i$ makes zero profit:
\begin{align}
    P_iQ_i&= \sum\limits_{f\in F} W_f L_{if} + \sum\limits_{j\in N} P_j M_{ij} + \sum\limits_{m\in M} P_m M_{im}\quad \text{ for all } i \in N.
\end{align}

\subsection{Equilibrium}
Market clearing conditions for good and factor markets satisfy 
\begin{align}
    Q_i &= C_i + X_i + \sum\limits_{j\in N} M_{ji} \qquad \text{ for each } i \in N \label{eq:GMC}.
\end{align}
Equation (\ref{eq:GMC}) is the good market clearing condition. I assume $X_i$ is exogenous as in \cite{ACCDP22} so that a price clearing the market always exists for each domestically produced good even if it is exported. 

Since this is a real model, nominal prices are indeterminate unless I supplement one additional equation. To do so, I impose the following
\begin{align*}
    PC &\leq \mathcal{M} = E,
\end{align*}
where $\mathcal{M}$ is the money supply that I take as exogenous in what follows. This is a cash-in-advance constraint used, for example, in \cite{LTS22} and \cite{AB22}.\footnote{It can be shown that this ``constraint" is isomorphic to a model with money in the utility function that is separable from aggregate consumption. The cash-in-advance constraint thus serves no other purpose than pinning down nominal variables without affecting real allocations in this model.} We can think of this restriction as the small open economy's central bank effectively pinning down total nominal expenditure ($E$), providing an exogenous nominal anchor. It is apparent that the central bank, conditional on knowing $C$, which is determined by \emph{real variables}, can implement any price level, $P$, that it desires consistent with $C$. This model features the classical dichotomy, where real variables are determined independently of the nominal side.\footnote{The converse is not true as real shocks can affect nominal variables. See \cite{Vegh2013} chapter 5, especially footnote 11.} Under these assumptions, one should interpret the results as highlighting the role of production networks for the consumer price index, conditional on an exogenous central bank monetary policy.


Similar to \cite{BF19BBVA}, I define an equilibrium in this economy using a dual approach in which feasible and equilibrium allocations are found by taking as given factor prices $\bm{W}$ and a level of expenditure, $E$, as follows 
\begin{enumerate}
    \item Given sequences ($\bm{W},\bm{P}_D,\bm{P}_M, \bm{\Pi}$)  and exogenous parameters ($T$), the household chooses $(\bm{C}_D,\bm{C}_M)$ to maximize its utility subject to its budget constraint.
    \item Given ($\bm{W},\bm{P}_D,\bm{P}_M$) and production technologies, firms choose $(\bm{L}_i, \bm{M}_i)$ to minimize their cost of production.
    \item Given $\bm{X}$, goods markets clears.
    \item The cash-in-advance constraint holds with equality $PC = \mathcal{M} = E$
\end{enumerate}


\subsection{Characterizing Changes in the Price Index}

Having defined the environment, optimality, and equilibrium conditions, I can now study changes in the consumer price index, $\widehat{P}$.  Inflation here consists of a log-linear approximation around the initial price level equilibrium. The purpose of the model is to distill whether and how the production network may matter for inflation, which in the model is a cross-sectional statement rather than a dynamic statement. ``Inflation" in this context can thus be understood in the space rather than the time dimension. This concept has been used, for example, to study inflation in the US during the COVID-19 period \citep{BF22, dGKOSY22}, and the role of sticky prices in production networks \citep{LTS22, BR22}.

The following result characterizes how the consumer price index reacts to changes in exogenous variables.

\begin{proposition} \label{prop1}
Consider a perturbation $(\widehat{\bm{Z}},\widehat{\bm{W}}, \widehat{\bm{P}}_M)$ around some initial equilibrium. Up to a first order, changes in the aggregate price index, $\widehat{P}$, satisfy
\begin{align}
     \widehat{P} &= -\left(\bm{\bar{\lambda}}^T - \bm{\widetilde{\lambda}}^T \right)\widehat{\bm{Z}} + \left(\bm{\bar{\Lambda}}^T - \bm{\widetilde{\Lambda}}^T \right)\widehat{\bm{W}} + (\bar{\bm{b}}_M^T + \widetilde{\bm{b}}_M^T)\widehat{\bm{P}}_M\label{eq:main},
\end{align}
where 
\begin{align*}
    \bm{\widetilde{\lambda}}^T&= \bar{\bm{x}}^T\bm{\Psi}_D;\qquad 
    \bm{\widetilde{\Lambda}}^T= \bar{\bm{x}}^T\bm{\Psi}_D\bm{A};\qquad
    \widetilde{\bm{b}}_M^T= \bm{\bar{b}}_D^T\bm{\Psi}_D\bm{\Gamma}
\end{align*}
\end{proposition}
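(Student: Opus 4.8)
The plan is to derive everything from the firms' pricing equations together with the household's price index, reduce it to linear algebra via the Leontief inverse, and then re-express the coefficients using two accounting identities coming from the equilibrium conditions.

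First I would log-linearize the sectoral pricing relations. Because good and factor markets are perfectly competitive and each $F^i$ is constant returns to scale, $P_i = MC_i(Z_i,\bm{P}_D,\bm{P}_M,\bm{W})$. Applying Shephard's lemma to the optimized cost function and reading off cost shares from the zero-profit identity, the elasticities of $MC_i$ with respect to $P_j$, $P_m$, and $W_f$ are exactly $\Omega_{ij}$, $\Gamma_{im}$, and $a_{if}$, while the elasticity with respect to $Z_i$ is $-1$ since $Z_i$ enters Hicks-neutrally (so total cost scales like $1/Z_i$). Stacking over sectors gives $\widehat{\bm{P}}_D = -\widehat{\bm{Z}} + \bm{A}\widehat{\bm{W}} + \bm{\Omega}\widehat{\bm{P}}_D + \bm{\Gamma}\widehat{\bm{P}}_M$, which I would solve for $\widehat{\bm{P}}_D = \bm{\Psi}_D(-\widehat{\bm{Z}} + \bm{A}\widehat{\bm{W}} + \bm{\Gamma}\widehat{\bm{P}}_M)$ using $\bm{\Psi}_D = (\bm{I}-\bm{\Omega})^{-1}$.

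Next I would substitute this into the first-order expansion of the consumer price index, $\widehat{P} = \bar{\bm{b}}_D^T\widehat{\bm{P}}_D + \bar{\bm{b}}_M^T\widehat{\bm{P}}_M$, which holds because $U$ is homogeneous of degree one. This yields $\widehat{P} = -\bar{\bm{b}}_D^T\bm{\Psi}_D\widehat{\bm{Z}} + \bar{\bm{b}}_D^T\bm{\Psi}_D\bm{A}\widehat{\bm{W}} + (\bar{\bm{b}}_D^T\bm{\Psi}_D\bm{\Gamma} + \bar{\bm{b}}_M^T)\widehat{\bm{P}}_M$, so the import-price coefficient is already $\bar{\bm{b}}_M^T + \widetilde{\bm{b}}_M^T$. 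To convert the remaining two coefficients, I would use, first, the goods-market-clearing identity: multiplying $Q_i = C_i + X_i + \sum_j M_{ji}$ by $P_i/E$ and substituting $P_i M_{ji} = \Omega_{ji}P_jQ_j$ gives $\bar{\bm{\lambda}} = \bar{\bm{b}}_D + \bar{\bm{x}} + \bm{\Omega}^T\bar{\bm{\lambda}}$, hence $\bar{\bm{\lambda}}^T = (\bar{\bm{b}}_D + \bar{\bm{x}})^T\bm{\Psi}_D$, i.e. $\bar{\bm{b}}_D^T\bm{\Psi}_D = \bar{\bm{\lambda}}^T - \widetilde{\bm{\lambda}}^T$; and second, the factor-aggregation identity: the $f$-th entry of $\bar{\bm{\lambda}}^T\bm{A}$ is $\sum_i (P_iQ_i/E)(W_fL_{if}/P_iQ_i) = W_fL_f/E = \bar{\Lambda}_f$, so $\bar{\bm{\lambda}}^T\bm{A} = \bar{\bm{\Lambda}}^T$ and therefore $\bar{\bm{b}}_D^T\bm{\Psi}_D\bm{A} = \bar{\bm{\Lambda}}^T - \widetilde{\bm{\Lambda}}^T$. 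Collecting terms gives \eqref{eq:main}.

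There is no deep obstacle here — the statement is essentially a first-order accounting exercise — but the step needing the most care is the sectoral pricing equation: one must be explicit that constant returns to scale, Shephard's lemma, and zero profit together force the cost elasticities to equal the observed expenditure shares $\Omega_{ij},\Gamma_{im},a_{if}$, and that the productivity elasticity is exactly $-1$. It is also worth flagging that exports $\bm{X}$ are exogenous, which is precisely what lets the goods-market-clearing identity be used directly without differentiating an export block, and that in the dual formulation $\widehat{\bm{W}}$ is treated as an exogenous perturbation, so no factor-market-clearing condition enters the argument.
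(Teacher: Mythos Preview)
Your proposal is correct and follows essentially the same route as the paper's proof: log-linearize the sectoral pricing equation via Shephard's lemma, invert with $\bm{\Psi}_D$, substitute into the CPI expansion, and then rewrite the coefficients using the goods-market-clearing identity $\bar{\bm{\lambda}}^T=(\bar{\bm{b}}_D+\bar{\bm{x}})^T\bm{\Psi}_D$ and the factor-aggregation identity $\bar{\bm{\Lambda}}^T=\bar{\bm{\lambda}}^T\bm{A}$. If anything, your version is slightly more explicit than the paper's about why the cost elasticities coincide with $\Omega_{ij},\Gamma_{im},a_{if}$ and about the role of exogenous $\bm{X}$ and $\widehat{\bm{W}}$.
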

\begin{proof}
See Appendix \ref{proof:prop1}.
\end{proof}

The above expression highlights how opening up the economy to goods trade and introducing a production network structure alter the usual prediction of closed economy models. I now proceed with some illustrations that provide intuition for this expression.

\paragraph{Illustration 1: Closed economy.} The following proposition characterizes CPI in a closed economy. 

\begin{proposition}\label{prop2}
In a closed economy, equation (\ref{eq:main}) reduces to 
\begin{align*}
   \widehat{P} &=  -\bm{\lambda}^T\widehat{\bm{Z}} + \bm{\Lambda}^T\widehat{\bm{W}},
\end{align*}
\end{proposition}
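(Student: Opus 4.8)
\textbf{Proof proposal for Proposition \ref{prop2}.} The plan is to simply specialize the master formula \eqref{eq:main} to the closed-economy primitives, which amounts to two reductions: killing the trade-related objects, and showing that the expenditure-based weights collapse onto the GDP-based weights. First I would record what ``closed economy'' means here: the import set is empty, $M = \varnothing$, and exports vanish, $X_i = 0$ for every $i \in N$ (equivalently $\bm{\bar x} = \bm{0}$). The third term of \eqref{eq:main}, namely $(\bar{\bm{b}}_M^T + \widetilde{\bm{b}}_M^T)\widehat{\bm{P}}_M$, is then an empty sum and drops out. Moreover $\bm{\widetilde{\lambda}}^T = \bar{\bm{x}}^T \bm{\Psi}_D = \bm{0}^T$ and $\bm{\widetilde{\Lambda}}^T = \bar{\bm{x}}^T \bm{\Psi}_D \bm{A} = \bm{0}^T$, so the correction terms $\bm{\widetilde\lambda}$ and $\bm{\widetilde\Lambda}$ disappear, leaving $\widehat{P} = -\bm{\bar\lambda}^T \widehat{\bm{Z}} + \bm{\bar\Lambda}^T \widehat{\bm{W}}$.

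The remaining step is to argue that in a closed economy total expenditure equals nominal GDP, $E = GDP$, so that $\bar\lambda_i = P_iQ_i/E = P_iQ_i/GDP = \lambda_i$ and $\bar\Lambda_f = W_fL_f/E = W_fL_f/GDP = \Lambda_f$, i.e. $\bm{\bar\lambda} = \bm{\lambda}$ and $\bm{\bar\Lambda} = \bm{\Lambda}$. To see $E = GDP$: sum the zero-profit conditions $P_iQ_i = \sum_f W_fL_{if} + \sum_{j} P_jM_{ij}$ over $i$ (the import term being absent), which gives $\sum_i P_iQ_i = \sum_f W_fL_f + \sum_i\sum_j P_jM_{ij}$, hence value added $\sum_i P_iQ_i - \sum_i\sum_j P_jM_{ij} = \sum_f W_fL_f \equiv GDP$. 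Next multiply the good-market-clearing condition \eqref{eq:GMC}, which with $X_i = 0$ reads $Q_i = C_i + \sum_j M_{ji}$, by $P_i$ and sum over $i$: after relabeling the double-sum index, $\sum_i P_iQ_i = \sum_i P_iC_i + \sum_i\sum_j P_jM_{ij}$. Comparing the two displays yields $\sum_i P_iC_i = GDP$, and since with no imported consumption $E = PC = \bm{P}_D^T\bm{C}_D = \sum_i P_iC_i$, we conclude $E = GDP$.

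Combining the two reductions gives exactly $\widehat{P} = -\bm{\lambda}^T \widehat{\bm{Z}} + \bm{\Lambda}^T \widehat{\bm{W}}$, as claimed. I do not anticipate a genuine obstacle here; the only point requiring a little care is the accounting identity $E = GDP$, in particular correctly swapping the dummy indices in $\sum_i\sum_j P_jM_{ji}$ versus $\sum_i\sum_j P_jM_{ij}$, and making sure the transfer term $T$ and profits $\Pi_i$ play no role (profits are zero by constant returns and competition, and $T$ does not enter the price-index characterization). If desired, one could instead derive Proposition \ref{prop2} directly from the proof of Proposition \ref{prop1} by imposing $M=\varnothing$ and $\bm{X}=\bm{0}$ from the outset, but the specialization route above is cleaner since Proposition \ref{prop1} is already available.
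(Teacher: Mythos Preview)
Your argument is correct. You specialize \eqref{eq:main} by setting $M=\varnothing$ and $\bm{X}=\bm{0}$, which kills both the import-price term and the export-adjustment terms $\bm{\widetilde\lambda},\bm{\widetilde\Lambda}$, and then you close the gap between expenditure-based and GDP-based weights via the accounting identity $E=GDP$, derived cleanly from zero profits plus goods-market clearing. The index-swap in the double sum is handled correctly, and your derivation of $E=GDP$ does not actually rely on the budget constraint at all, so the parenthetical about $T$ is harmless.

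The paper, however, does \emph{not} proceed by specializing \eqref{eq:main}. Instead it proves Proposition~\ref{prop2} independently, starting from Hulten's theorem for real GDP, $\widehat{Y}=\bm{\lambda}^T\widehat{\bm{Z}}+\bm{\Lambda}^T\widehat{\bm{L}}$, rewriting $\widehat{\bm{L}}=-\widehat{\bm{W}}+\bm{1}_F\widehat{nGDP}+\widehat{\bm{\Lambda}}$ (and using $\bm{\Lambda}^T\bm{1}_F=1$, $\bm{\Lambda}^T\widehat{\bm{\Lambda}}=0$), and finally invoking the closed-economy identity that the GDP deflator equals the CPI, so $\widehat{CPI}=\widehat{nGDP}-\widehat{Y}$. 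Your route is the more economical one given that Proposition~\ref{prop1} is already in hand; the paper's route is more circuitous but serves its expository purpose of exhibiting Proposition~\ref{prop2} as a literal corollary of Hulten's theorem, which is the benchmark the open-economy result is meant to break.
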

\begin{proof}
See Appendix \ref{proof:prop2}.
\end{proof}
\Cref{prop2} shows the exact form of changes in the CPI in a closed economy \citep[see][]{BF22}. Intuitively, CPI changes are a weighted average of changes in productivity (weighted by the Domar weights, $\bm{\lambda}$) and factor prices (weighted by the factor shares, $\bm{\Lambda}$). 

Equation (\ref{eq:main}) extends this for small open economies with production networks.
There are four differences between the closed economy expression and equation (\ref{eq:main}). First, of course, inflation now depends on import price changes. Second, the Domar weights and factor shares in equation (\ref{eq:main}) are now based on expenditure rather than on nominal GDP. This distinction arises in small open economies that feature trade imbalances, in which the income from domestic production need not equal what they consume. Since what matters for the CPI is what domestic consumers spend, nominal expenditure is the relevant object for dividing sales and factor payments.

Third, the effect of sectoral productivity changes on the CPI is dampened relative to a closed economy or a small open economy without production networks. To see this, note that the relevant statistic for the effect of sectoral productivity changes on the CPI is $\bm{\bar{\lambda}}^T - \bm{\widetilde{\lambda}}^T $, and thus the Domar weight $\bm{\bar{\lambda}}^T$ is no longer the sufficient statistic for understanding how sectoral productivity changes affect CPI. Importantly, the relevant elasticity requires adjusting the expenditure-based Domar weight $\bar{\bm{\lambda}}$ by substracting $\bm{\widetilde{\lambda}}^T = \bar{\bm{x}}^T\bm{\Psi}_D$. This adjustment comes from the fact that what matters is the domestic consumer's exposure to changes in sectoral productivity. 

To be precise, let me write the price index as a function of domestic and imported goods prices, i.e., $P = \mathcal{P}(\bm{P}_D,\bm{P}_M)$. Suppose there is a change in the productivity of sector $k$, $\widehat{Z}_k$, with no changes in factor or import prices. This shock impacts all domestic goods prices due to input-output linkages. Its propagation to the CPI is a tale of two elasticities. First, how exposed is the consumer to changes in domestic good prices $\frac{\partial \log \mathcal{P}}{\partial \log P_i}$, for all $i$. By the envelope theorem, this elasticity is simply the consumption share of the good at the initial equilibrium, $\bar{b}_i$. Second, the impact depends on how productivity passes through to each domestic goods price, $\frac{\partial \log P_i}{\partial \log Z_k}$. This last term is simply given by $-\Psi_{ik}$, which measures the sensitivity of the price of good $i$ to a change in productivity of sector $k$ after taking into account all direct and indirect linkages via the production network. Collecting all these pieces, we can write:
\begin{align*}
    \widehat{P}&= \sum\limits_{i\in N}\underbrace{\frac{\partial \log \mathcal{P}}{\partial \log P_i}}_{=\bar{b}_i}\underbrace{\frac{\partial \log P_i}{\partial \log Z_k}}_{=-\Psi_{ik}}\widehat{Z}_k = -\bar{b}_D^T\bm{\Psi}_{(:,\; k)}\widehat{Z}_k,
\end{align*}
where $\bm{\Psi}_{(:,\; k)}$ is the $k$th column of the Leontief-inverse matrix $\bm{\Psi}$. Note that again, the reason why $\bm{\bar{b}}^T_D\bm{\Psi}$ is not equivalent to the vector of sales share is precisely because this is not the relevant exposure of the domestic consumer in the presence of input-output linkages and international trade.

Third, the effect of factor prices on the CPI is also dampened relative to the closed economy benchmark or a small open economy without production networks. Although the logic is similar to that of how productivity changes pass through CPI, I analyze the factor price case in detail in the next example, as it also allows me to relate equation (\ref{eq:main}) to  a well-known concept in the trade literature: the factor content of exports.

\paragraph{Illustration 2: Domestic factor demand and the factor content of exports.} 
This example helps to illustrate how the fact that exports used domestic factors lowers the sensitivity of prices to changes in domestic factor prices.
Equation (\ref{eq:main}) highlights a tension between domestic factor demand and the factor content of exports, in the spirit of \cite{ACCDP22}. When an economy exports, some of its factors of production end up meeting foreign demand which, everything else equal, reduces \emph{domestic factor demand}. These factors meet foreign demand because they are used to produce domestic goods that are exported. As a result, factor price changes put less pressure on the price index\footnote{Though the factor content of exports is already well known in the trade literature, I am unaware of previous work linking this precise notion to inflation.}. Moreover, this channel is in place whenever an economy exports to the rest of the world, even if there is no production network. To see this, notice that factor payments to a given factor $f$ can be written as
\begin{align}
    W_f L_f &= \sum\limits_{i\in N} W_fL_{if} = \sum\limits_{i\in N} a_{if}\lambda_i. \label{eq:factor_payments}
\end{align}
Without intermediate inputs, the Domar weight of each sector, $\lambda_i$, is simply that sector's share in total final demand 
\begin{align}
    Q_i &= C_i + X_i \Longrightarrow \lambda_i = b_i + x_i. \label{eq:goods_mkt_clearing}
\end{align}
Combining equations (\ref{eq:factor_payments}) and (\ref{eq:goods_mkt_clearing}), I get 
\begin{align}
    W_f L_f - \underbrace{\sum\limits_{i\in N}a_{if} x_i}_{\text{Factor Content of Exports}} &= \underbrace{\sum\limits_{i\in N} a_{if}b_i}_{\text{Domestic Factor Demand}}.
\end{align}
This equation shows the tension: a rise in exports -- higher $x_i$ -- must be balanced out by a fall in domestic factor demand on the right-hand side, conditional on aggregate payments to factor $f$ being constant. This is one of the mechanisms through which exports cause domestic factor prices to put less pressure on domestic consumer prices.

When we allow for a production network and trade, sectors that do not export much directly (have low $x_i$) could end up exporting indirectly via other producers. The case I analyzed above, without a production network, is a particular case in which $\bm{\Omega} = \bm{0}_{N\times N}$ and thus $\bm{\Psi_D} = \bm{I}$. What this suggests is that in the presence of intermediate input linkages, what matters for the price index changes is not just how much each sector exports directly, $\bm{x}^T$, but also how much it exports indirectly through intermediate input linkages, $\bm{x}^T\bm{\Psi}_D$ \citep[see][]{DKMT21}. This mechanism also affects how much each factor ends up being exported and how much factor price changes are passed through to the CPI, since $\bm{x}^T\bm{\Psi}_D\bm{A}$ represents the factor content of exports when there are intermediate input linkages across sectors and -$\left(\bar{\bm{\lambda}}^T - \bar{\bm{x}}^T \bm{\Psi}_D\right)$ is the relevant ``Domar weight" for the pass-through of sectoral technology shocks to inflation. 


\paragraph{Illustration 3: Import price changes with intersectoral linkages and the network-adjusted import consumption share.}
This example illustrates that intersectoral linkages amplify the influence of import price changes on inflation. In the presence of intermediate input linkages and imported intermediate inputs, the \emph{direct} import consumption shares $\bar{\bm{b}}_M^T$ are \emph{not} a sufficient statistic for the effect of import prices on the CPI. To see this, fix factor prices and assume no productivity shocks, $\widehat{\bm{W}} = \bm{0}_F$ and  $\widehat{\bm{Z}} = \bm{0}_N$. Then 
\begin{align*}
    \widehat{P} & = \underbrace{\left(\bar{\bm{b}}_M^T+ \bar{\bm{b}}_D^T\bm{\Psi}_D \bm{\Gamma}\right)}_{\text{Network-adjusted import consumption share}}\widehat{\bm{P}}_M
\end{align*}
As was the case for the factor content of exports, this equation shows the importance of \emph{network-adjusted import consumption shares}. While domestic consumers purchase imports directly as final consumption ($\bar{\bm{b}}_M$), they also consume imports indirectly by purchasing domestically produced goods that directly or indirectly use imported intermediate inputs. This channel is captured by the second term on the right-hand side, $\bar{\bm{b}}_D^T\bm{\Psi}_D \bm{\Gamma}$, which captures the total import content of each domestically produced good when we account for intermediate input linkages. Intuitively, a rise in the price of import good  $m$ raises the marginal cost of a given producer $h$ by $\Gamma_{hm}$. This rise in the marginal cost implies that $P_h$ raises. This increase in $P_h$, through intermediate input linkages, raises the price of (say) good $i$, by $\Psi_{ih}$, which denotes the exposure of sector $i$ to changes in the price of sector $h$ after taking into account intermediate input linkages. This increase in the price of good $i$, in turn, is passed through to the consumer price index via $\bar{b}_i$.

\paragraph{Additional Models. } I provide two additional models in Appendix \ref{app:two_period} and \ref{appendix:dynamic_model}. In Appendix \ref{app:two_period}, I provide a detailed two-period model of a small open economy to show that the simplified model presented here shares the same intuition. The key idea follows \cite{BF22}, who in turn build on \cite{Krugman98} and \cite{EK12}, where we can separate a dynamic problem into two sub-periods: the present and the future. All action happens in the present, while the future can be taken as given. Shocks occur during the present and last only for that period, wherein the ``future", the economy returns to its initial no-shock equilibrium.  Conditional on this interpretation, the model in this section is isomorphic to a multi-period model. 

In Appendix \ref{appendix:dynamic_model},  I provide a three-sector (exportable, importable, and non-tradable) canonical small open economy dynamic model, as in Chapter 8 of \cite{SGU17Book}. There,  I embed a production network structure and show that the results also hold in that environment. In working out these additional models, I contribute to the literature by effectively embedding production networks into a small open economy setup and studying its consequences for inflation. 

\subsection{An alternative representation of factor markets: from factor prices to factor supplies}  

Factor price changes on the right-hand side of the equation (\ref{eq:main}) are exogenous and thus can be considered primitives in my exercise. However,  typical neoclassical models treat factor prices as endogenous and factor supply as exogenous. Writing the problem considering factor prices as given simplified the intuition for the main result of this paper. It also allowed me to differentiate the proximate causes of inflation between my model and a closed economy with or without production networks.\footnote{This is also the route followed by \cite{BF19BBVA} when studying aggregation in disaggregated economies via \emph{aggregate cost functions} rather than aggregate production functions.} I now show that the same intuition holds if we reverse the ordering, treating factor prices as endogenous outcomes and factor supply as exogenous objects.


\subsubsection{Solving in terms of factor supply quantities}

The key difference when solving for factor prices as endogenous objects is that we need to introduce factor market clearing conditions introduced below:
\begin{align*}
    \sum\limits_{i\in N}L_{if}&= \bar{L}_f \quad \forall f \in F,
\end{align*}
where the left-hand side is factor demand, and the right-hand side is factor supply. In what follows, I assume that factor supplies, $\bar{L}_f$, are exogenous. 

Recall that the expenditure-based share of factor $f$ can be written as
\begin{align*}
    \bar{\Lambda}_f &= \frac{W_f \bar{L}_f}{ \mathcal{M}},
\end{align*}
where I have imposed the factor market clearing condition and the cash-in-advance constraint.

Thus, changes in factor prices can be written as 
\begin{align}
    \widehat{W}_f &= \widehat{\bar{\Lambda}}_f + \widehat{\mathcal{M}} - \widehat{\bar{L}}_f,
\end{align}
which in vector form is 
\begin{align*}
    \widehat{\bm{W}} &= \widehat{\bm{\bar{\Lambda}}} + \bm{1}_F \widehat{\mathcal{M}} - \widehat{\bm{\bar{L}}}.
\end{align*}
Intuitively, factor prices can go up because (i)  demand is \emph{reallocated} toward that factor, as captured by $\widehat{\bm{\bar{\Lambda}}}$; (ii) aggregate demand is going up ($\widehat{\mathcal{M}})$; and (iii) there is a decrease in  (inelastic) factor supply ($\widehat{\bm{\bar{L}}}$). As shown in the proposition below, this decomposition allows me to write changes in the price index as a function of sectoral and aggregate shocks and also changes in these expenditure-based factor shares.

\begin{proposition} \label{prop3}
Consider a perturbation $(\widehat{\mathcal{M}}, \mathrm{d}T, \widehat{\bm{Z}}, \widehat{\bm{P}}_M, \widehat{\bm{X}}, \widehat{\bm{\bar{L}}})$ around some initial equilibrium. Up to a first order, changes in the aggregate price index, $\widehat{P}$, satisfy
\begin{align}
    \widehat{P} &= -\left(\bm{\bar{\lambda}}^T - \tilde{\bm{\lambda}}^T\right)\widehat{\bm{Z}} - \underbrace{\tilde{\bm{\Lambda}}^T\widehat{\bm{\bar{\Lambda}}} - \left(\bar{\bm{\Lambda}}^T - \bm{\tilde{\bm{\Lambda}}}^T\right)\widehat{\bm{\bar{L}}} + \frac{\mathrm{d}T}{\mathcal{M}} + \left(1 - \tilde{\bm{\Lambda}}^T\bm{1}_F\right) \widehat{\mathcal{M}}}_{\text{Factor Price Changes}}\nonumber \\ 
    & +  \left( \bm{\bar{b}}_M^T + \tilde{\bm{b}}_M^T\right)\widehat{\bm{P}}_M, \label{eq:main_3}
\end{align}
\end{proposition}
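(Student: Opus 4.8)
The plan is to obtain Proposition~\ref{prop3} from Proposition~\ref{prop1} by a pure change of variables in the factor block: replace the exogenous factor-price vector $\widehat{\bm{W}}$ in \eqref{eq:main} with the triple $(\widehat{\bm{\bar{\Lambda}}},\widehat{\bm{\bar{L}}},\widehat{\mathcal{M}})$. First I would record the map between the two. Log-differentiating the identity $\bar{\Lambda}_f = W_f\bar{L}_f/\mathcal{M}$ — which has imposed factor market clearing $\sum_{i\in N}L_{if}=\bar{L}_f$ and the cash-in-advance constraint $E=\mathcal{M}$ — gives $\widehat{\bar{\Lambda}}_f=\widehat{W}_f+\widehat{\bar{L}}_f-\widehat{\mathcal{M}}$, i.e. $\widehat{\bm{W}}=\widehat{\bm{\bar{\Lambda}}}+\bm{1}_F\widehat{\mathcal{M}}-\widehat{\bm{\bar{L}}}$, which is the display already established just before the statement.

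Next I would substitute this into \eqref{eq:main}. The $\widehat{\bm{Z}}$ and $\widehat{\bm{P}}_M$ terms are unchanged, and the factor term becomes $(\bar{\bm{\Lambda}}^T-\widetilde{\bm{\Lambda}}^T)\widehat{\bm{\bar{\Lambda}}}+(\bar{\bm{\Lambda}}^T-\widetilde{\bm{\Lambda}}^T)\bm{1}_F\widehat{\mathcal{M}}-(\bar{\bm{\Lambda}}^T-\widetilde{\bm{\Lambda}}^T)\widehat{\bm{\bar{L}}}$. The $\widehat{\bm{\bar{L}}}$ piece already carries the coefficient claimed in \eqref{eq:main_3}, so the whole proof collapses to the scalar identity $(\bar{\bm{\Lambda}}^T-\widetilde{\bm{\Lambda}}^T)\widehat{\bm{\bar{\Lambda}}}+(\bar{\bm{\Lambda}}^T-\widetilde{\bm{\Lambda}}^T)\bm{1}_F\widehat{\mathcal{M}}=-\widetilde{\bm{\Lambda}}^T\widehat{\bm{\bar{\Lambda}}}+\tfrac{\mathrm{d}T}{\mathcal{M}}+(1-\widetilde{\bm{\Lambda}}^T\bm{1}_F)\widehat{\mathcal{M}}$, equivalently $\bar{\bm{\Lambda}}^T\widehat{\bm{\bar{\Lambda}}}+\bar{\bm{\Lambda}}^T\bm{1}_F\,\widehat{\mathcal{M}}=\tfrac{\mathrm{d}T}{\mathcal{M}}+\widehat{\mathcal{M}}$.

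The one non-mechanical input is the nominal accounting identity. Zero firm profits together with the household budget constraint and $PC=E=\mathcal{M}$ give $\mathrm{GDP}\equiv\sum_{f\in F}W_fL_f=\mathcal{M}+T$, hence $\bar{\bm{\Lambda}}^T\bm{1}_F=\sum_f\bar{\Lambda}_f=1+T/\mathcal{M}$. Totally differentiating this same identity (equivalently, differentiating $\sum_f\bar{\Lambda}_f$ and using $\mathrm{d}\bar{\Lambda}_f=\bar{\Lambda}_f\widehat{\bar{\Lambda}}_f$) yields $\bar{\bm{\Lambda}}^T\widehat{\bm{\bar{\Lambda}}}=\mathrm{d}\!\left(T/\mathcal{M}\right)=\tfrac{\mathrm{d}T}{\mathcal{M}}-\tfrac{T}{\mathcal{M}}\widehat{\mathcal{M}}$. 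Plugging both facts in, $\bar{\bm{\Lambda}}^T\widehat{\bm{\bar{\Lambda}}}+\bar{\bm{\Lambda}}^T\bm{1}_F\,\widehat{\mathcal{M}}=\big(\tfrac{\mathrm{d}T}{\mathcal{M}}-\tfrac{T}{\mathcal{M}}\widehat{\mathcal{M}}\big)+\big(1+\tfrac{T}{\mathcal{M}}\big)\widehat{\mathcal{M}}=\tfrac{\mathrm{d}T}{\mathcal{M}}+\widehat{\mathcal{M}}$, exactly the needed identity; collecting terms delivers \eqref{eq:main_3}.

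Two remarks close the argument. The export vector $\widehat{\bm{X}}$ is listed among the perturbed objects but does not appear in the formula: to first order, changes in exports move the CPI only by repricing factors in general equilibrium, an effect entirely absorbed into the endogenous term $\widehat{\bm{\bar{\Lambda}}}$, so the statement is effectively conditional on $\widehat{\bm{\bar{\Lambda}}}$, consistent with the surrounding discussion. And the step I expect to be the main obstacle is not depth but bookkeeping: one must carry the trade-imbalance wedge $T/\mathcal{M}$ between GDP and expenditure through the $\widehat{\mathcal{M}}$ terms so that its coefficient collapses to $1-\widetilde{\bm{\Lambda}}^T\bm{1}_F$ and the transfer enters as the level perturbation $\mathrm{d}T/\mathcal{M}$ rather than as $\tfrac{T}{\mathcal{M}}\widehat{\mathcal{M}}$ — getting that distinction and the signs right is essentially the entire content of the derivation.
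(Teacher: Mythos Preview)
Your proposal is correct and follows essentially the same route as the paper: substitute the change of variables $\widehat{\bm{W}}=\widehat{\bm{\bar{\Lambda}}}+\bm{1}_F\widehat{\mathcal{M}}-\widehat{\bm{\bar{L}}}$ into \eqref{eq:main}, then use the accounting identity $\mathrm{GDP}=\mathcal{M}+T$ to get $\bar{\bm{\Lambda}}^T\bm{1}_F=1+T/\mathcal{M}$ and its differential $\bar{\bm{\Lambda}}^T\widehat{\bm{\bar{\Lambda}}}=\tfrac{\mathrm{d}T}{\mathcal{M}}-\tfrac{T}{\mathcal{M}}\widehat{\mathcal{M}}$, which collapses the factor block to the claimed form. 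Your remark on why $\widehat{\bm{X}}$ does not appear explicitly also matches the paper's own discussion.
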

\begin{proof}
See Appendix \ref{proof:prop3}.
\end{proof}

\Cref{prop3} is an \emph{ex-post} sufficient statistics results in the spirit of \cite{BF22trade} since there is still one endogenous vector that needs to be solved for, namely $\widehat{\bm{\bar{\Lambda}}}$. Conditional on knowing this vector, we can compute the response of the CPI to changes in the other primitives. Note that in a closed economy, this term would be zero, since in this case $\tilde{\bm{\Lambda}} = \bm{0}_F$ and thus factor share reallocation would not have any first-order effect on inflation. 

Note that the only difference relative to the model with exogenous factor price changes is that we are now mapping factor price changes to other exogenous objects ($\widehat{\mathcal{M}}, \mathrm{d}T, \widehat{\bar{\bm{L}}}$). As in \cite{BF22}, decreases in factor supply are inflationary because they increase factor prices conditional on factor demands. The impact on inflation of money supply changes, $\widehat{\mathcal{M}}$, is dampened relative to the closed economy because factor prices have less pass through to inflation.  Increases in net transfers to the rest of the world, $\mathrm{d}T$, also increase CPI inflation because, conditional on money supply, they increase nominal GDP and thus increase factor prices. In this sense, factor price changes are ``sufficient statistics" for how money supply and net transfer changes affect the CPI.

A few additional remarks regarding Proposition \ref{prop3} are in order. First, note that sectoral export demands, $\bm{X}$, do not appear directly in this equation. It means that it can only affect inflation through its effect on $\widehat{\bm{\bar{\Lambda}}}$. Second,  as I show in Appendix \ref{app:factor_share_solving}, $\widehat{\bm{\bar{\Lambda}}}$ can be found by solving a linear system of equations. This system of equations depends on primitives, the production network structure, and the elasticities of substitution for producers and consumers. Thus, solving for $\widehat{\bm{\bar{\Lambda}}}$ requires us to take a stand on the values of elasticities of substitution of producers across different inputs and of consumers across different goods. Perhaps more important than this is that through this endogenous vector, elasticities of substitution matter to a first-order for CPI inflation in small open economies. Hence, even with a simple, sufficient statistics framework, elasticities of substitution matter to a first-order for inflation in small open economies, a result that contrasts with the closed-economy benchmark.

\section{The empirical relevance of adjustments of CPI elasticity}\label{sec: empirics}

This section examines the quantitative relevance of the proposed production network adjustments for inflation across small open economies. I start by describing the data sources and how I classify countries as small open economies. I then present three different results. First, I focus on the network-adjusted domestic consumption shares, which are the relevant elasticities for the pass-through of sectoral technology shocks to inflation. Second, I examine the adjustments to labor shares once we account for indirect exports. Finally, I compare direct and network-adjusted import consumption shares.

\subsection{Data} \label{subsec:data}

Although network-adjusted shares require more information than sales and factor shares, they are still easily computable from available data. In this subsection, I briefly describe the necessary data to compute them.


\paragraph{Input-Output Tables.} I calculate the objects $(\bm{\Omega}, \bm{\bar{x}}, \bm{\bar{b}}_D, \bm{\bar{b}}_M, \bm{\bar{\lambda}})$ using domestic input-output tables from the World Input-Output database release 2016, the latest available.

\paragraph{Penn World Tables (PWT).} I use version PWT 10.01. This dataset contains income, input, output, and productivity information between 1950 and 2019 for 183 countries.  This database is freely available to download at \href{https://www.rug.nl/ggdc/productivity/pwt/?lang=en}{https://www.rug.nl/ggdc/productivity/pwt/?lang=en}. 

Using this database, I construct two measures. First, I denote the share of world GDP accounted for by country $c$ as $\alpha_c$. Formally, 
\begin{align*}
    \alpha_c &= \frac{nGDP_c}{nGDP_W}, \quad nGDP_W = \sum\limits_{c \in C} nGDP_c
\end{align*}
I measure $nGDP_c$ using the series \emph{cgdpo}, which corresponds to the Output-side real GDP at current PPP's (in 2017 US\$ millions).

To measure trade openness, I use the series $csh\_x$ and $csh\_m$ in the PWT. The first corresponds to the ratio of merchandise exports over nominal GDP, while the second corresponds to imports over nominal GDP at current PPP's. I define the trade openness of country $c$ as 
\begin{align*}
    \text{Openness}_c&= \frac{\text{Exports}_c + \text{Imports}_c}{nGDP_c} = csh\_x_c - csh\_m_c,
\end{align*}
where the last line follows since in the data $csh\_m_c$ = $- \frac{\text{Imports}_c}{nGDP_c}$

\paragraph{Classifying Small Open Economies.} I apply two criteria to separate countries into small and non-small open economies according to the data. First, an economy is \emph{small} if $\alpha_c\leq 0.05$. Second, an economy is \emph{open} if $\text{Openness}_c\geq 0.3$. A country is a small open economy if it satisfies both conditions.

\subsection{Results}
In this subsection, I compare the network-adjusted objects with their closed economy and no-network adjustment counterparts whenever possible. All cross-sectional plots are based on the year 2014 unless stated otherwise.

\subsubsection{Network-adjusted domestic consumption shares}

I start by showing results for network-adjusted domestic consumption shares $\bar{\bm{\lambda}} - \bm{\bar{x}}^T \bm{\Psi}_D$. Figure \ref{fig:adj_domar_weights} shows three scenarios for the average sector in small open economies in panel (a) and for non-small open economies in panel (b). The x-axis shows the unadjusted Domar weights, while the y-axis shows the adjusted objects. Light squares in these figures refer to the export-adjusted weights, while dark points are export-network-adjusted weights. Thus, the dark points in these plots are the network-adjusted domestic consumption shares. As we can see, the adjustments are stronger for small open economies than non-small open economies. Moreover, we can see that the average Domar weight in small open economies is around 4 percent; it decreases to around 2.84 percent with the export adjustment and to around 2.31 percent when adjusting for network-adjusted exports. This is a non-negligible change. It suggests that the inflation impact of a given sized sectoral productivity shock in an average-sized sector will be dampened by around 50 percent for the average small open economy relative to the closed economy benchmark. 

\begin{figure}[htbp!]
    \centering
        \caption{Export and Network-Export adjusted Domar weights. }
    \label{fig:adj_domar_weights}
    \begin{minipage}{\textwidth}
    \caption*{\footnotesize \textbf{(a) Small Open Economies}}
    \centering
    \includegraphics[width = .8\textwidth]{./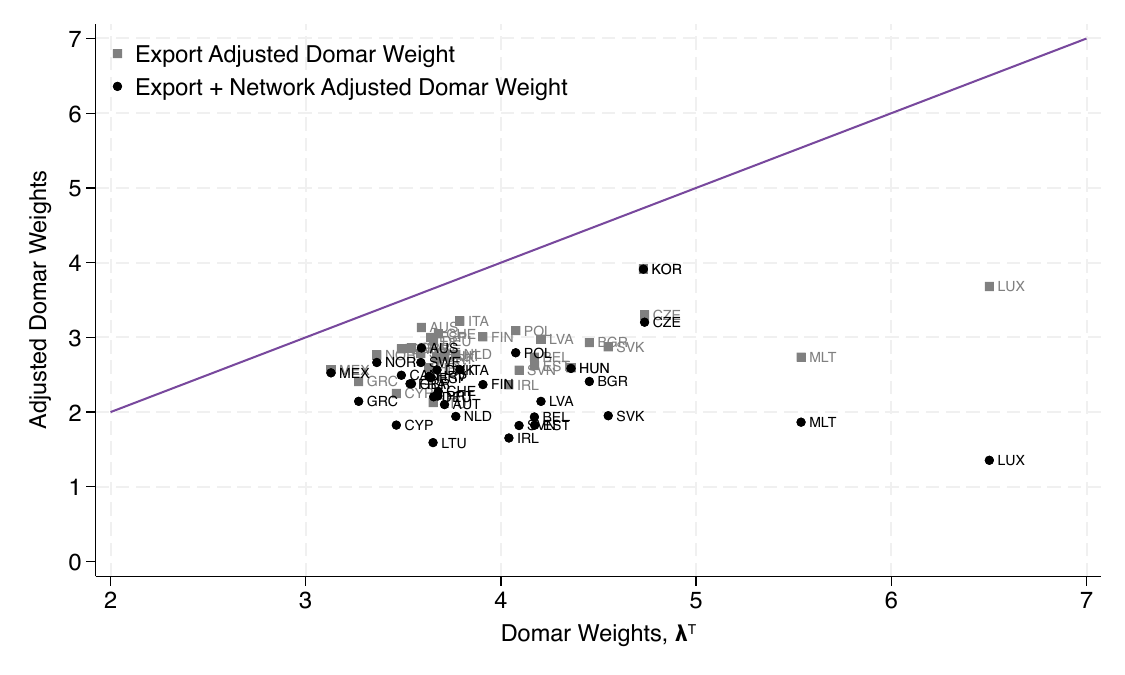}
    \end{minipage}
       \begin{minipage}{\textwidth}
        \centering
       \caption*{\footnotesize \textbf{(b) Non-Small Open Economies}}
\includegraphics[width = .8\textwidth]{./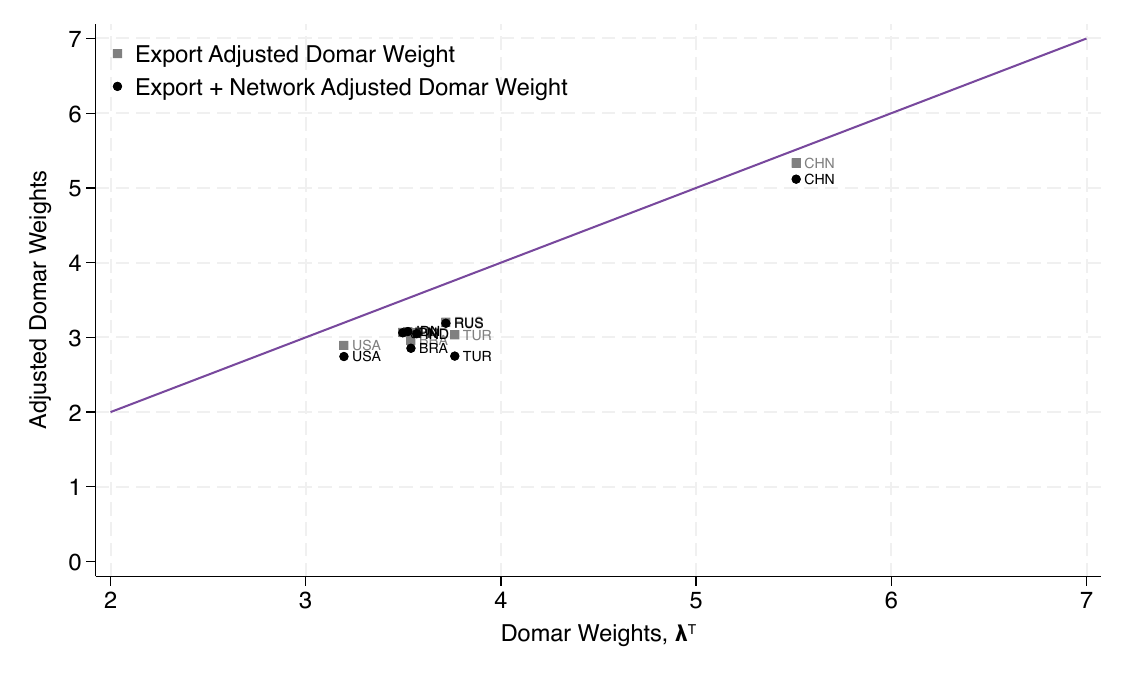}
    \end{minipage}
    \Fignote{This figure shows the average Domar weight for each country. The x-axis corresponds to the average Domar weight computed as in the closed economy model, $\bm{\lambda}^T$. The gray squares subtract only for direct exports i.e. $\bm{\bar{\lambda}}^T - \bm{\bar{x}}^T$. The black circles  further consider the production network structure, $\bm{\bar{\lambda}}^T - \bm{\bar{x}}^T\bm{\Psi}_D$. Panel (a) shows the results for small open economies, while Panel (b) shows the results for non-small open economies.}
\end{figure}

To provide a more concrete example, Figure \ref{fig:dw_uk} shows the three sectors for which network-export adjustment is the largest in the United Kingdom: administrative support, legal and accounting, and electricity, gas, and water. The latter sector is illustrative. Its Domar weight is around 5.95 percent. This number goes down only to 5.9 percent when we subtract direct exports. For all practical purposes, this means this sector is non-tradable. Once we consider indirect exports, however, the network-adjusted consumption share decreases to 4.4 percent. This illustrates how indirect linkages are quantitatively relevant and provide information beyond the direct export share.

\begin{figure}[htbp!]
    \caption{Three sectors with largest adjustments: United Kingdom.} \label{fig:dw_uk}
            \centering \includegraphics[width = 0.8\textwidth] {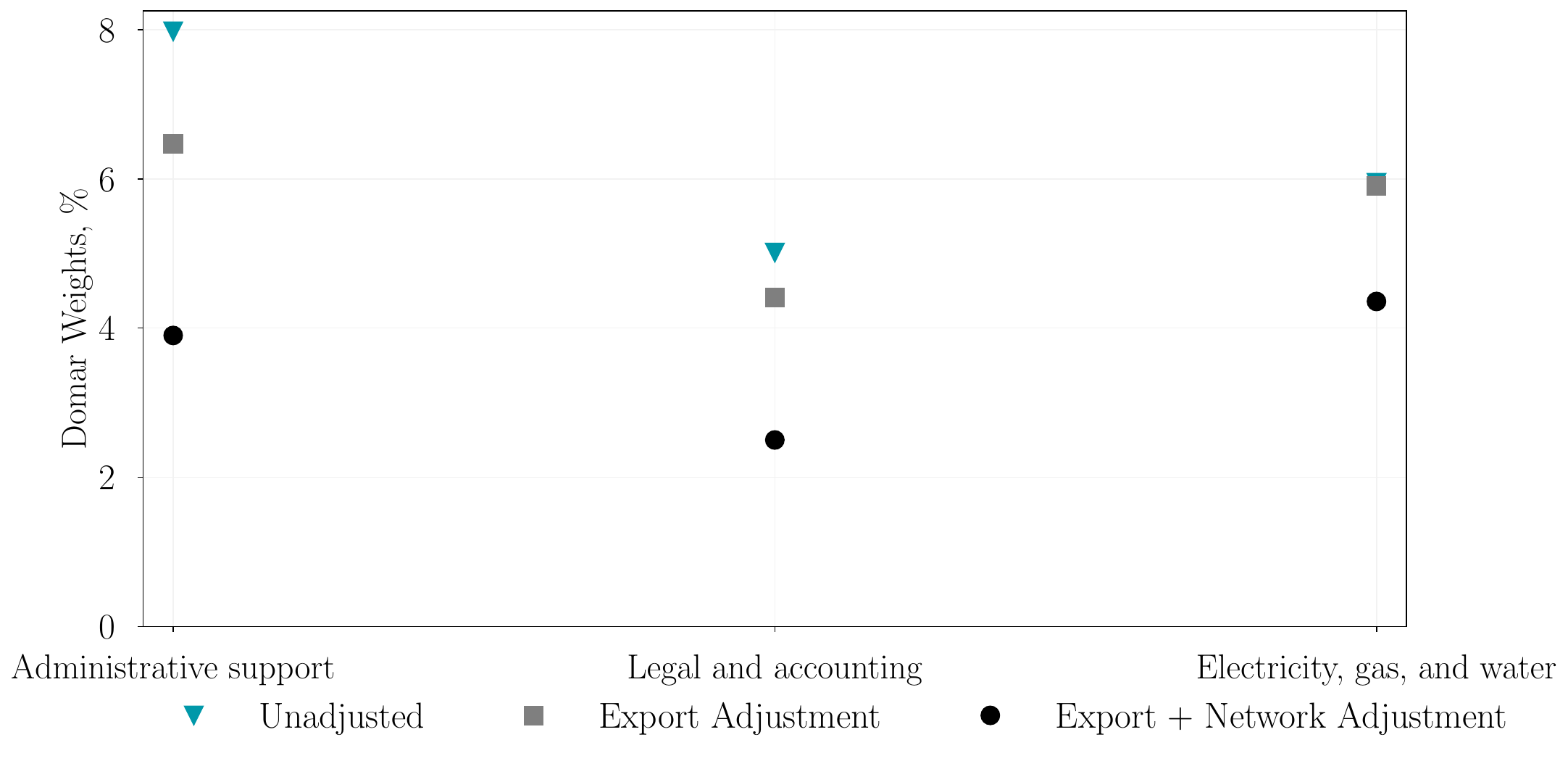}
            \Fignote{This figure shows the three sectors with the largest export network-adjusted share for the United Kingdom. }
    \end{figure}

\paragraph{Regression framework.} Which sectors and countries are most affected by network adjustments? To answer this question, I estimate the following cross-sectional regression
\begin{align}
    y_{sc}&= \alpha_s + \alpha_c + \varepsilon_{cs}, \label{eq:regression_framework}
\end{align}
where $y_{sc}$ represents the difference between a measure for the small open economy with a production network relative to the small open economy without networks for a given country $c$ and sector $s$. $\alpha_s$ is a sector-specific fixed effect, $\alpha_c$ is a country-specific fixed effect, and $\varepsilon_{cs}$ is an error term. From this regression, I get estimates of sector and country-specific fixed effects. Notice that these are identified up to a normalization, which in my case is that $\sum\limits_{s\in S}\hat{\alpha}_s = 0$ and $\sum\limits_{c\in C}\hat{\alpha}_c = 0$. All fixed effects are interpreted as deviations from the average fixed effects. 

In Panel (a) of Figure \ref{fig:reg_adj_domar_weights}, I show the country-fixed effect estimates when the left-hand side variable is the difference between the network-adjusted export share and the direct export share. Note that these country-fixed estimates tell us the average difference between these shares --- as a fraction of aggregate expenditure --- across sectors within a country. We can see that the countries with the largest adjustments are Luxembourg, Slovak Republic, Malta, and Latvia, while countries with the smaller adjustments are Korea, Hungary, and Mexico. These numbers indicate that the export sectors of the latter economies do not rely much on inputs from the domestic economy, not exporting much indirectly.\footnote{Remember that it does not need to be confused with a country that does not export at all.}

Panel (b) does the same exercise for the sector-fixed effect estimates. These estimates tell us the average difference between shares across countries within a sector. Financial services is the sector with the largest average production network adjustment. Note that the Electricity, Gas, and Water (EGSA) is the seventh sector with the largest difference, while Legal and accounting is the sixth sector. Thus, the examples cited above are not specific to the United Kingdom but have consistently large network adjustments across countries.  Intuitively, these sectors are important suppliers for domestic sectors that export directly or indirectly. 


These exercises suggest that accounting for the production network is important in computing inflation elasticities and that the adjustment varies substantially across countries and sectors.

\begin{figure}[htbp!]
    \centering
        \caption{Country and sector fixed effects: export-network adjusted - export adjusted.}
    \label{fig:reg_adj_domar_weights}
    \begin{minipage}{0.9\textwidth}
    \caption*{\footnotesize \textbf{(a) Country Fixed Effects}}
    \includegraphics[width = \textwidth]{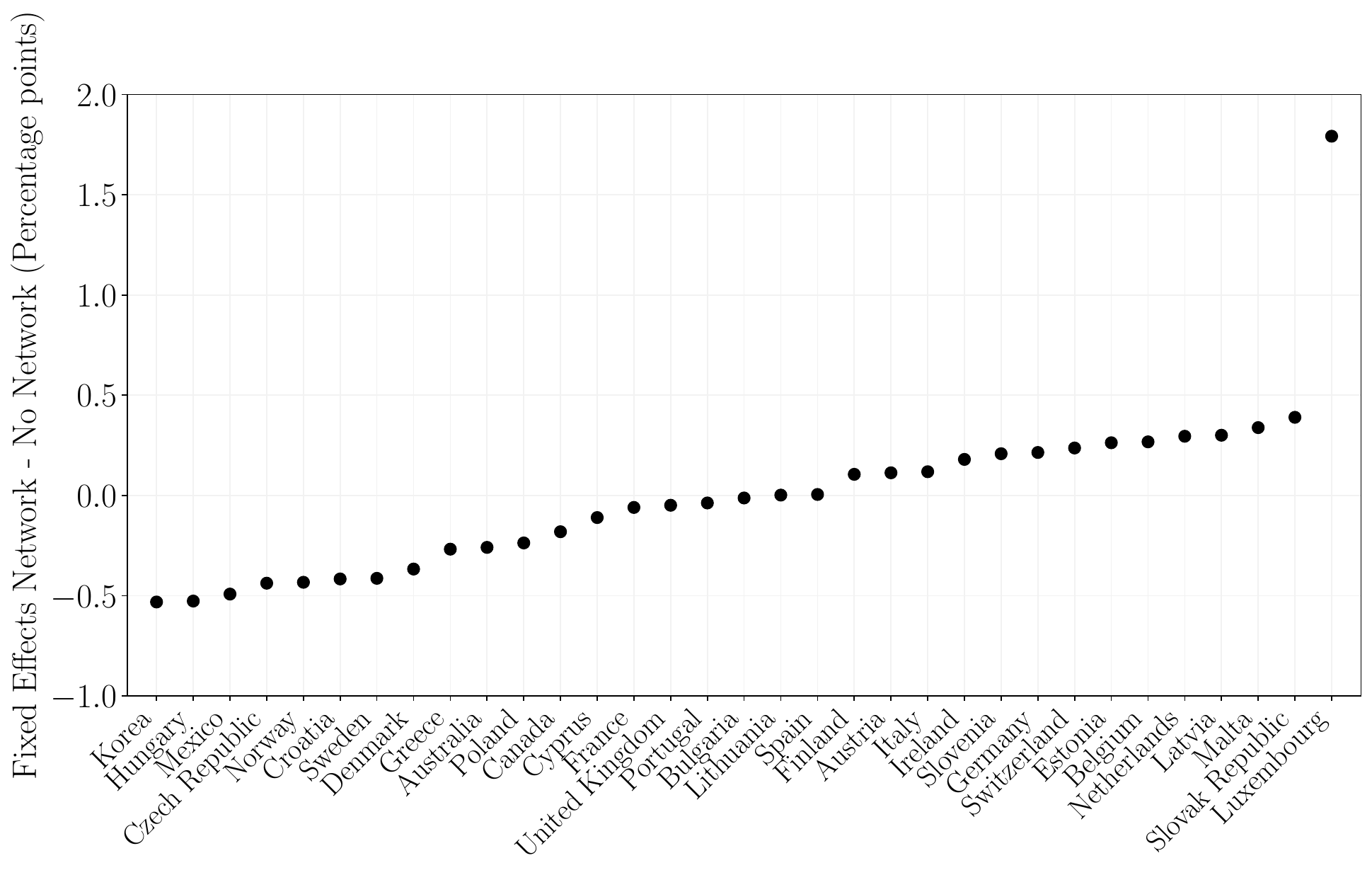}
    \end{minipage}
       \begin{minipage}{0.9\textwidth}
       \caption*{\footnotesize \textbf{(b) Sector Fixed Effects}}
\includegraphics[width = \textwidth]{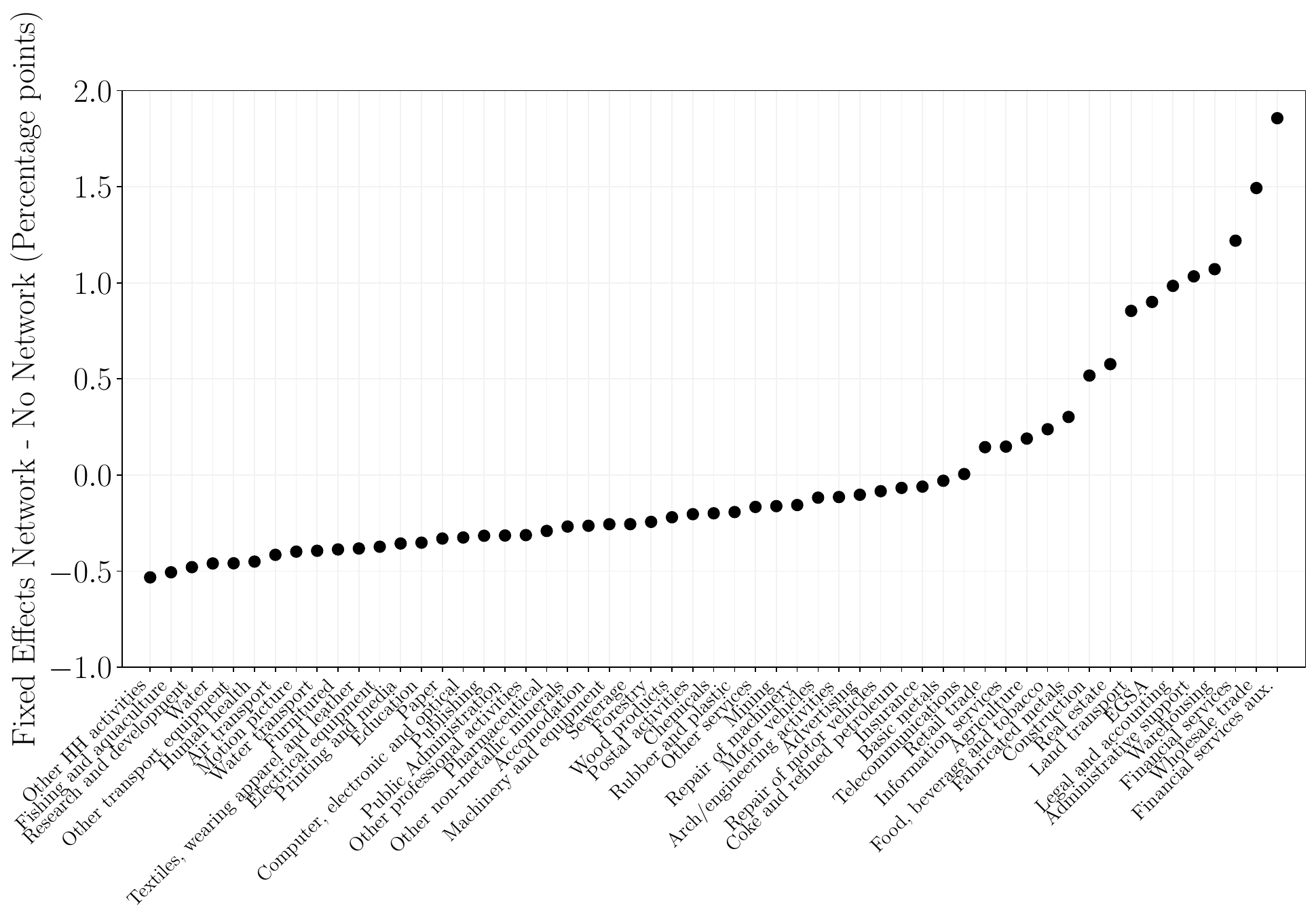}
    \end{minipage}
    \Fignote{This figure shows fixed effects from estimating equation (\ref{eq:regression_framework}) where the dependent variable is the difference between the network-adjusted export share and the direct export share. Panel (a) shows country fixed effects estimates, while Panel (b) shows sector fixed effects.}
\end{figure}

\subsubsection{Network-adjusted domestic factor demand}
I now conduct a similar exercise to examine the importance of network adjustments for factor shares. First, I study how the aggregate labor share in different countries varies depending on the export and network export adjustment. I then consider how sector-specific labor shares vary when considering direct and indirect exports.

\paragraph{Labor share.} Figure \ref{fig:labor_shares} shows the labor share for different economies on the x-axis and the network export-adjusted labor share on the y-axis. Black diamons shows small open economies and gray circles represents non-small open economies. As we can see, the adjustments are again significant for small open economies but not for non-small open economies. The average labor share across non-small open economies is 53 percent, while the network export-adjusted labor share is 50 percent, a negligible change. In contrast, the average labor share in small open economies is around 57 percent, while the network export-adjusted labor share is only 39 percent. This means the impact of a given wage increase will be 32 percent lower in a small open economy with production networks relative to an otherwise similar closed economy. 

\begin{figure}[t!]
    \centering 
    \caption{Labor share adjustments for different countries. } \label{fig:labor_shares}
        \includegraphics[scale = 0.8]{./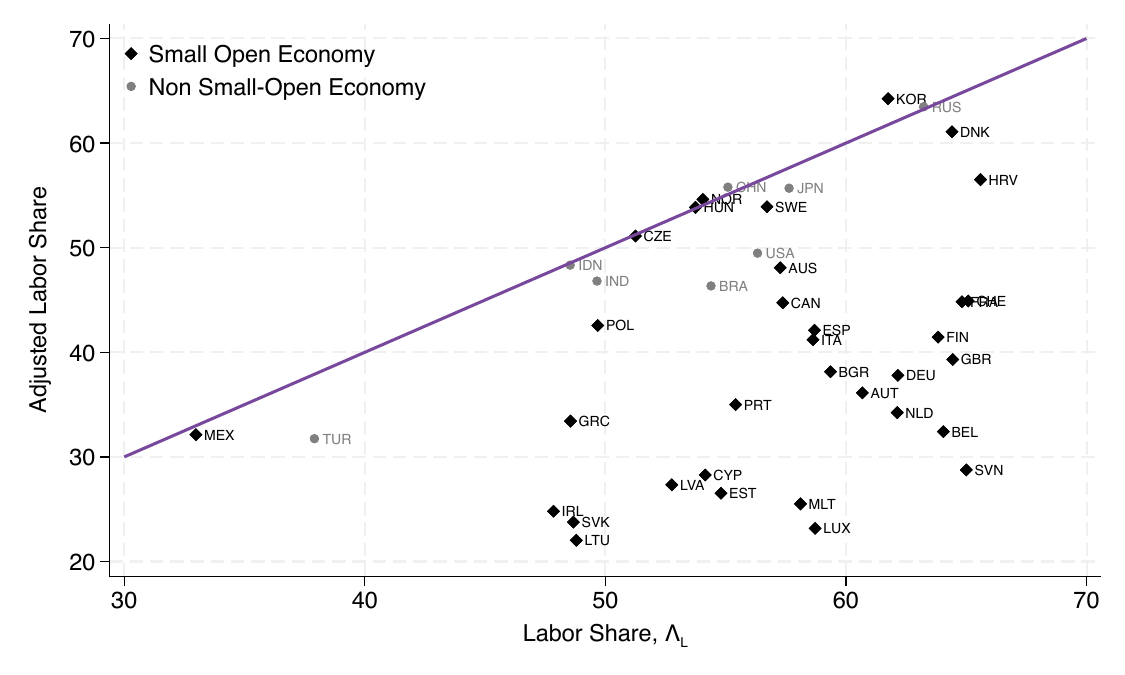}
        \Fignote{This figure shows the average labor share on the x-axis and the export-network adjusted labor share for small open economies in black diamonds and non-small open economies in gray circles.}
    \end{figure}

\paragraph{Sector-specific labor shares.} I now conduct a similar exercise to that of export and network-adjusted export measures above. Here, I consider the dependent variable to be the difference between the network-adjusted labor content of exports relative to the non-network-adjusted labor content of exports. 

This exercise illustrates the heterogeneity across sectoral labor markets. Before, I considered the aggregate labor share. However, this aggregate labor share is a weighted average of what happens at the sectoral level. It can be a misleading statistic for certain questions, especially in an environment such as COVID-19, where sectoral labor markets were hit differently.

Panel (a) of Figure \ref{fig:reg_adj_factor_shares} shows the results for the country-fixed effects, while Panel (b) shows the same but for sector-fixed effects. Apart from Luxembourg, the ranking differs from the network-adjusted domestic consumption share in Figure \ref{fig:reg_adj_domar_weights}. Interestingly, countries where sector-specific labor shares adjusted the most due to the domestic production network are the Netherlands, Slovenia, and Germany, while the ranking at the bottom stays the same. This says that Germany exhibits an average production network adjustment of sector-specific labor shares 0.15 percentage points larger than the adjustment for the average country.

Turning to the sector fixed-effects results, the sectors with the largest production network adjustment are Legal and Accounting, Wholesale Trade, and Administrative support. Legal and accounting, for example, has an average adjustment 0.6 percentage points larger than the average sector.  Since the 0.6 percentage point is an average across all countries, consider the Legal and Accounting sector in Germany as a concrete example. The share of this sector's labor on nominal GDP is around 2.6 percent of GDP. It goes down to 2.3 percent when we subtract exports and to 0.8 percent when we consider the domestic production network structure. Thus, ignoring the production network adjustment would significantly overstate how much wage changes sector pass-through to the CPI.

\begin{figure}[htbp!]
    \centering
        \caption{Country and sector fixed effects: export-network adjusted sector-specific factor shares. }
    \label{fig:reg_adj_factor_shares}
    \begin{minipage}{0.9\textwidth}
    \caption*{\footnotesize \textbf{(a) Country Fixed Effects}}
    \includegraphics[width = .95\textwidth]{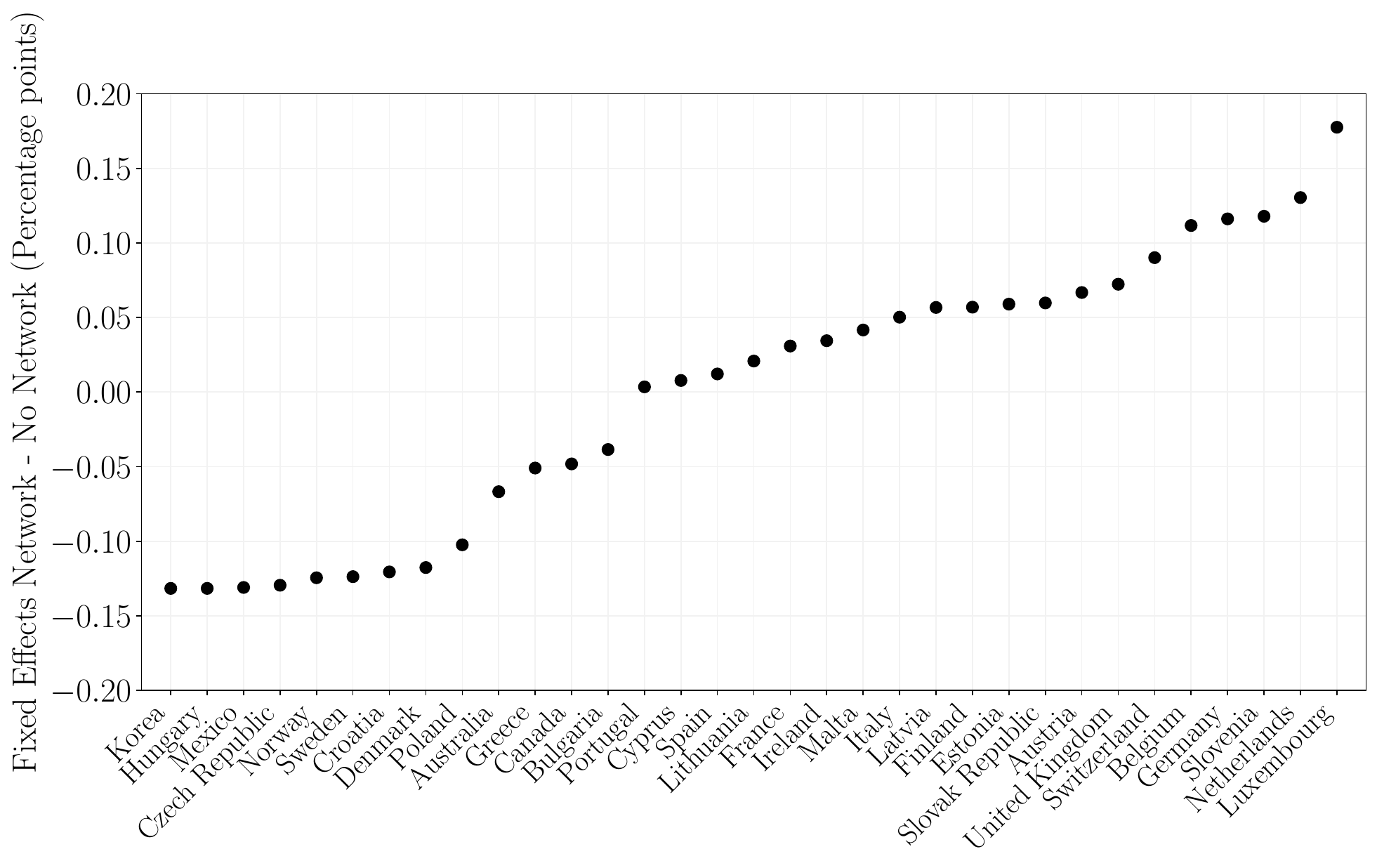}
    \end{minipage}
       \begin{minipage}{0.9\textwidth}
       \caption*{\footnotesize \textbf{(b) Sector Fixed Effects}}
\includegraphics[width = .95\textwidth]{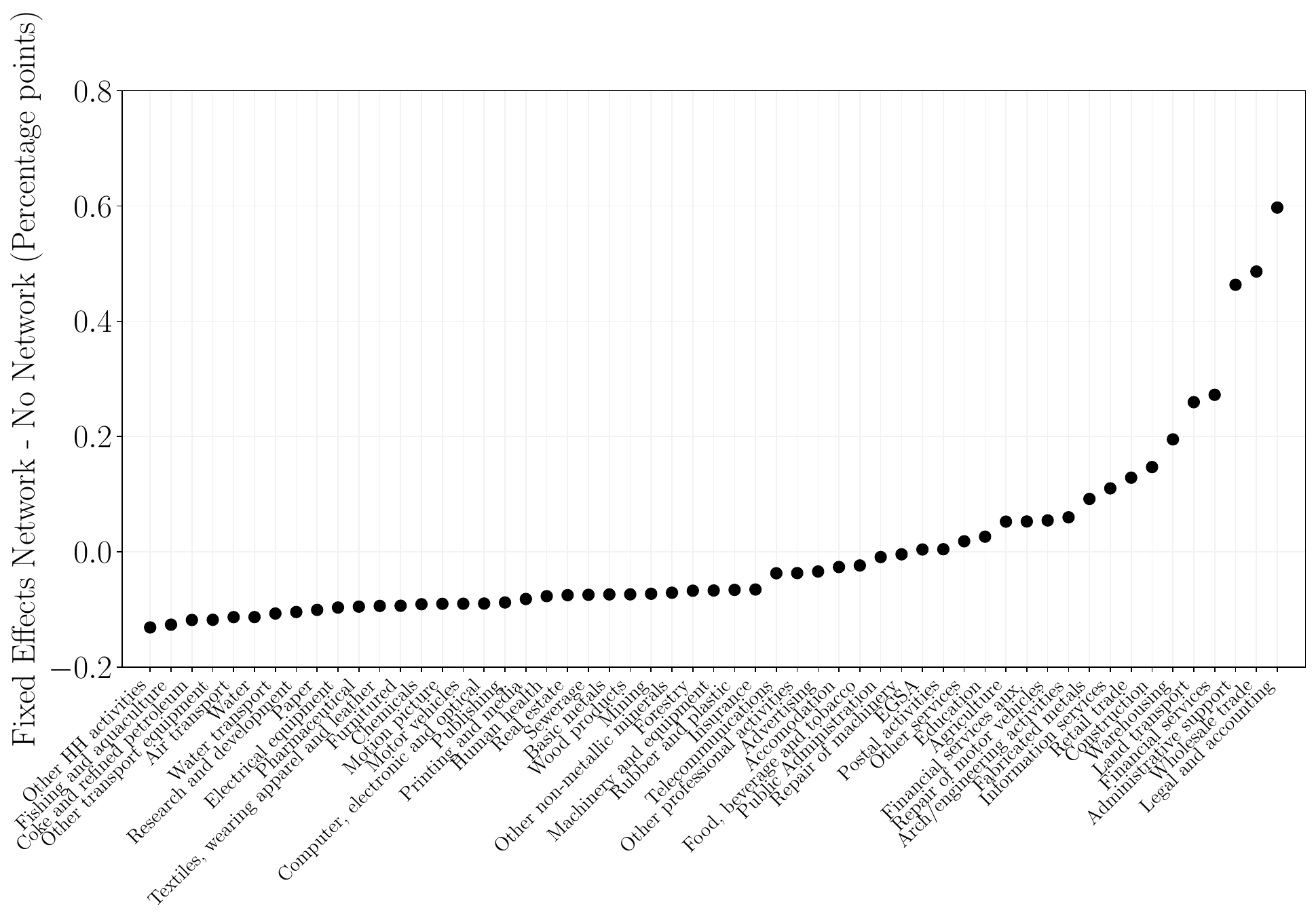}
    \end{minipage}
    \Fignote{This figure shows the fixed effects when the dependent variable is the difference between the network-adjusted sector-specific factor shares and the direct export share adjusted sector-specific factor shares. Panel (a) shows this difference for the country fixed effects estimates, while Panel (b) does the same for sector fixed effects.}
\end{figure}

\subsubsection{Network-adjusted import consumption shares}

As a final empirical exercise, I consider import consumption shares. Figure \ref{fig:import_consumption_shares}, provides a scatterplot of these shares across economies. On the x-axis, I show the direct import consumption share, while on the y-axis, I show the network-adjusted import consumption share. The average direct import consumption share across non-small open economies is 6.7 percent. It increases to 9.3 percent when considering the production network structure. While it increases by almost 3 percentage points, this change is small relative to the one I find for small open economies.
The average direct import consumption share across small open economies is around 17 percent. This number goes up to 30 percent when considering the production network structure. This is a 13 percentage points increase. It suggests that the pass-through from import price changes to inflation (almost) doubles when we introduce intersectoral linkages. 

\begin{figure}[t!]
    \centering
    \caption{Direct and Network-Adjusted import consumption shares.} \label{fig:import_consumption_shares}
            \includegraphics[scale = 0.8]{./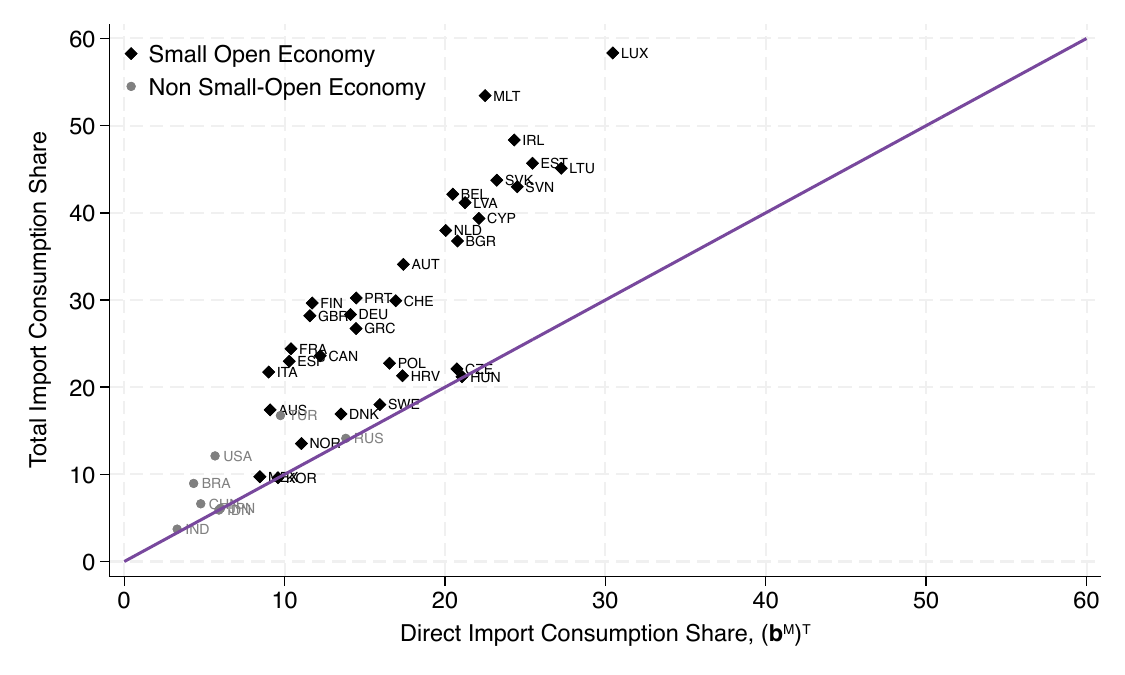}
            \Fignote{This figure shows the direct import consumption share on the x-axis and the network-adjusted import consumption share on the y-axis. Small open economies are the black diamonds, and non-small open economies are the gray circles.}
    \end{figure}

\section{The evolution of inflation in Chile and United Kingdom during COVID-19} \label{sec:application}

In this section, I use the model to study CPI inflation during the COVID-19 episode in Chile and the United Kingdom. 

This empirical examination requires more data relative to the previous section. While the earlier section showed information on the CPI elasticities and compared these across countries and sectors using the WIOT alone, this section requires taking a stand on the processes ($\widehat{\bm{W}}_t, \widehat{\bm{P}}_{Mt}, \widehat{\bm{Z}}_t$), which are not readily available for most countries worldwide. Therefore, I picked Chile and the United Kingdom, countries with all the necessary information to construct ($\widehat{\bm{W}}_t, \widehat{\bm{P}}_{Mt}, \widehat{\bm{Z}}_t$) that also belong to the small open economy category.

This is an ex-post exercise using existing data to analyze the past behavior of inflation between 2020 and 2022. Yet, Proposition \ref{prop1} is helpful for forecasting inflation and is thus a valuable tool for policymakers in small open economies. Provided that we have forecast information on the processes ($\widehat{\bm{W}}_t, \widehat{\bm{P}}_{Mt}, \widehat{\bm{Z}}_t$), we can combine this information with input-output tables to get an estimate of inflation. The accuracy of this exercise will depend on the accuracy of elasticities and that of the forecasted series. Throughout this section, I focus on the former, as it is the main point of this paper.

In what follows, I first describe the data. Then, I show how I map the model to the data. Finally, I discuss the results for Chile and the United Kingdom.


\subsection{Data}

\subsubsection{Chile}
\paragraph{Input-Output Tables.}
Since Chilean data is unavailable from the WIOT, I resort to Chilean National Accounts. I use \emph{Compilacion de Referencia} for year 2013. The structure is similar to that of the WIOT, having information on input-output linkages, final uses, and factor payments. Moreover, it is quite disaggregated, containing information for up to 171 industries. I collapse this data to a 17-sector classification due to data availability on sectoral wages.  This 17-sector classification is equivalent to SIC2.
\paragraph{Sectoral Productivity.} The ideal measure of productivity from the model is total factor productivity (TFP). However, TFP measures are hard to come by, especially at high frequencies and at the sectoral level. To circumvent this problem, I proxy sectoral TFP using sectoral labor productivity. I collect data on real GDP for the same 17 sectors and divide by total sectoral employment. Real GDP and sectoral employment data come from the Central Bank of Chile (CBCh) and are available quarterly from 1996 to 2022.
\paragraph{Sectoral Wages.} I source sectoral nominal wages from the Chilean National Institute of Statistics (INE) series \emph{Indice de Remuneraciones Nominal}. This database is available monthly from January 2016 to December 2022. To be consistent with the productivity data, I collapsed this data to a quarterly frequency.
\paragraph{Import Prices.} I use the import price index available from the CBCh quarterly from 2013 to 2022.

\subsubsection{United Kingdom}

\paragraph{Input-Output Tables.} I source data from the WIOT domestic tables as in the previous empirical section. I collapse these input-output tables into 20 industries to be consistent with the data on sectoral wages.
\paragraph{Sectoral Productivity.} I sourced data from the Office for National Statistics (ONS) of the United Kingdom. I downloaded quarterly estimates of labor productivity from the \emph{Flash productivity} report.\footnote{This data can be downloaded freely from the ``\emph{Flash productivity by section}" section at the ONS \href{https://www.ons.gov.uk/economy/economicoutputandproductivity/productivitymeasures/datasets/flashproductivitybysection}{here}.} This contains information for up to 17 industries. 
\paragraph{Sectoral Wages.} I source this data from the ONS of the United Kingdom. In particular, I use the dataset \texttt{EARN03}. This contains monthly information on average weekly earnings for around 20 industries. This dataset is available from 2000 to 2022.
\paragraph{Import Prices.} I use the import price index from the ONS (series \texttt{GD74}, dataset: \texttt{MM22}). This series is available at different frequencies. I use quarterly information from 2009 until 2022.

\subsection{Mapping the model to the data}

Before showing the results, a few remarks are in order. Since the model is static, all inherent inflation dynamics will combine the dynamics of exogenous variables and their interaction with the CPI elasticities. 

 First, I take all series and compute their level deviations from their value in 2018Q4. Formally, the sources of variation I feed in to construct implied inflation from the different models take the following form 
\begin{align*}
  \hat{y}_t &=   y_t - y_{2018Q4},
\end{align*}
where $y_t$ represents (the log) of any time series and $y_{2018Q4}$ is its value in 2018Q4. Notice that each vector now has a $t$ subscript as they are deviations from 2018Q4 at each time $t$.

In the above equation, I call the deviation $\hat{y}_t$ a ``shock". This differs from a structurally identified shock because I feed variation directly from the data, taking it as given.  With this caveat in mind and throughout this section, I refer to these $\hat{y}_t$ simply as shocks.

Using this procedure I construct counterparts to $\bm{\theta}_t = (\widehat{\bm{W}}_t, \widehat{\bm{P}}_{Mt}, \widehat{\bm{Z}}_t$) in the model. I measure factor prices as sectoral wages. I assume segmented labor markets such that there are different wages across sectors to capture better the behavior of labor markets during the COVID-19 episode, as highlighted in the recent literature \citep{BF22, dGKOSY22, dGKOSY23b,  dGKOSY23}. Since I cannot observe sector-specific prices for other factors, such as capital or land, I assume that those factor prices did not change over the sample period. 

CPI inflation in the data $\pi_t$, when $t$ refers to a quarter, is
\begin{align*}
  \pi_t &= \log P_t - \log P_{t-4}
\end{align*}
Combining the model and shocks, I have $\widehat{P}^{\text{Model}}_t$ as 
\begin{align*}
    \widehat{P}^{\text{Model}}_t &= -\sum\limits_{i\in N}\mathcal{R}^{CPI,Z}_i\widehat{Z}_{it} + \sum\limits_{f\in F} \mathcal{R}^{CPI,W}_f\widehat{W}_{ft} + \mathcal{R}^{CPI,M}_M\widehat{P}_{Mt}.
\end{align*}
Note that here $(\mathcal{R}^{CPI,Z}_i, \mathcal{R}^{CPI,W}_f, \mathcal{R}^{CPI,M}_M)$ stand for the responses of the CPI to changes in sectoral technology, factor prices, and import price, respectively. These objects are \emph{model-dependent} and thus will be different when considering the closed economy model, the small open economy model without production networks, and the small open economy with production networks. 

Finally, inflation \emph{from the model} is
\begin{align*}
    \pi_t^{\text{Model}}&= \widehat{P}^{\text{Model}}_t - \widehat{P}^{\text{Model}}_{t-4}.
\end{align*}

The approach of taking log differences relative to some initial point is the most transparent because it does not modify the data much, relative to other alternatives such as standard detrending procedures.

\subsection{Results}
In this subsection, I compare inflation implied by the models, $\pi_t^{\text{Model}}$, and that in the data.

Figures \ref{fig:chile_infl} and \ref{fig:uk_infl} show inflation in the data and the one implied by the model for Chile and the United Kingdom for 2020--2022, respectively. To highlight the distinct role of production networks and openness, I consider three models: a closed economy model (Closed, pink triangles), a small open economy without production networks (SOE no Network, green $\ast$), and a small open economy with production networks (SOE - Network, orange circles). I plot the model's numbers using symbols rather than lines to emphasize the absence of dynamics within the model apart from those generated by the shocks I am feeding in. 

Although the empirical exercise is fairly simple, it captures the data patterns well and more significantly for the small open economy with a production network in Chile and the United Kingdom. 

As pointed out, the model has no intrinsic dynamics: all the action over time comes from the dynamics in $\bm{\theta}_t$. A more meaningful comparison is to compare the moments implied by the model and those in the data. Table \ref{tab:moments_data_model} does precisely this and shows the first two moments of inflation in the data and the model. Panel (a) is for Chile, while Panel (b) shows the United Kingdom.

The average annual inflation in Chile between 2020 and 2022 was 6.13 percent, with a standard deviation of 3.89. The closed economy model delivers substantially lower mean inflation (0.98) and higher standard deviation (9.69) relative to the data. We can see that the sole introduction of a small open economy aspect, without production networks, gets us in the right direction as it exhibits a larger mean relative to the closed economy benchmark (1.45) and a lower standard deviation (6.88). The small open economy with production networks gets us closer to the data, with an average inflation of 2.41 and a standard deviation of 6.67. 

In the United Kingdom, the average inflation was 3.69 percent, almost half that of Chile during the same period, with a standard deviation of 3.11. The closed economy benchmark generates again too little average inflation (2.27 vs. 3.31) but now too low a standard deviation (2.57 vs. 3.11). As was the case for Chile, introducing the small open economy aspect put us in the right direction: inflation is higher on average (2.72) and has a higher standard deviation (2.64). Considering production networks again improves the results: the model exhibits an even higher mean (3.21) and standard deviation (3.00). 

\begin{figure}[htbp!]
    \centering
        \caption{Chile Inflation under different models.}
    \label{fig:chile_infl}
    \includegraphics[scale = 0.45]{./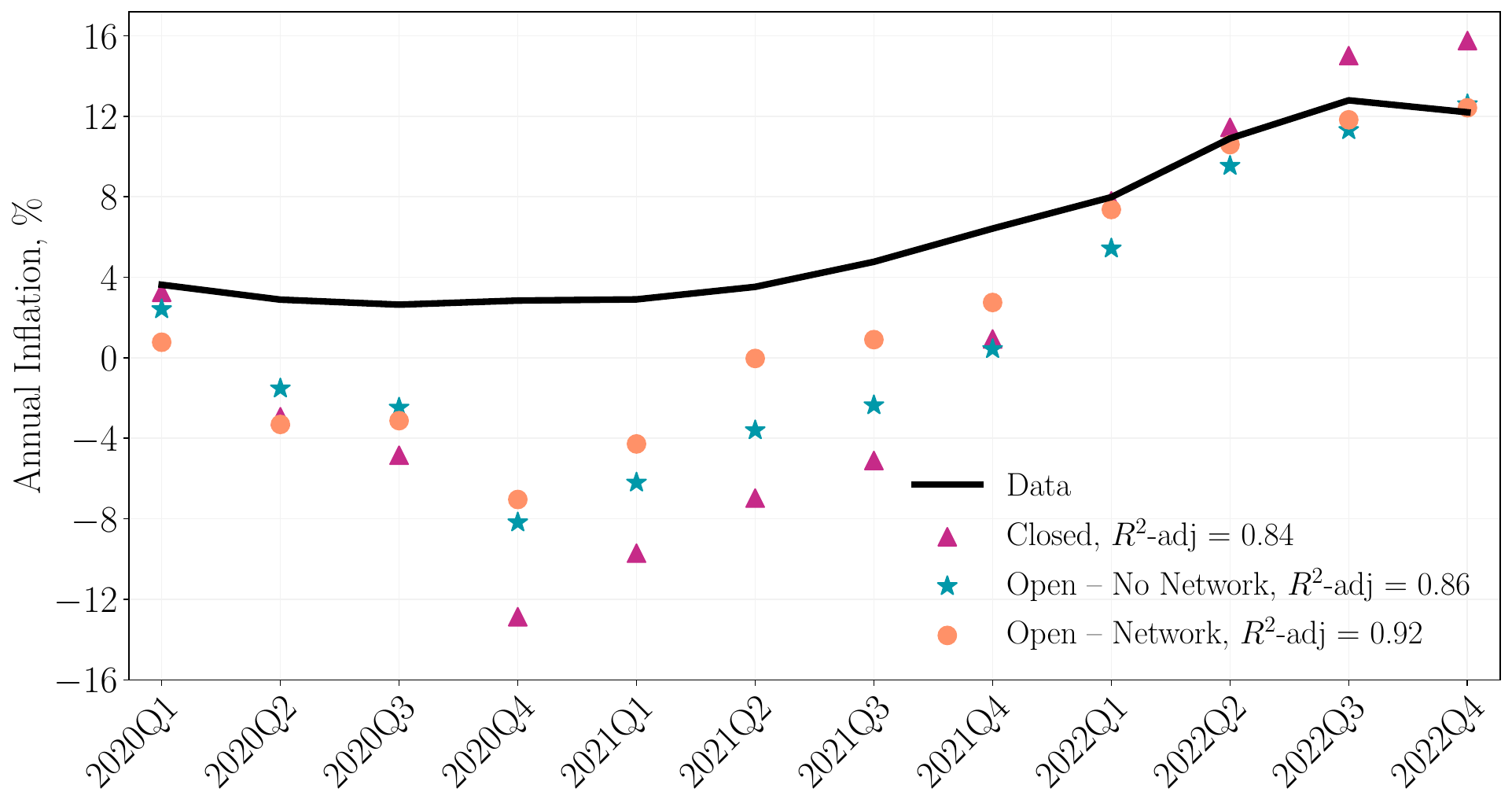}
    \Fignote{This figure shows inflation in the data and the one implied by the different models. The black line is the data. The pink triangles correspond to the closed economy model. The green $\ast$ are the small open economy model without production networks, and the orange circles correspond to the small open economy model with production networks.}
\end{figure}

\begin{figure}[htbp!]
    \centering
        \caption{United Kingdom Inflation under different models.}
    \label{fig:uk_infl}
    \includegraphics[scale = 0.45]{./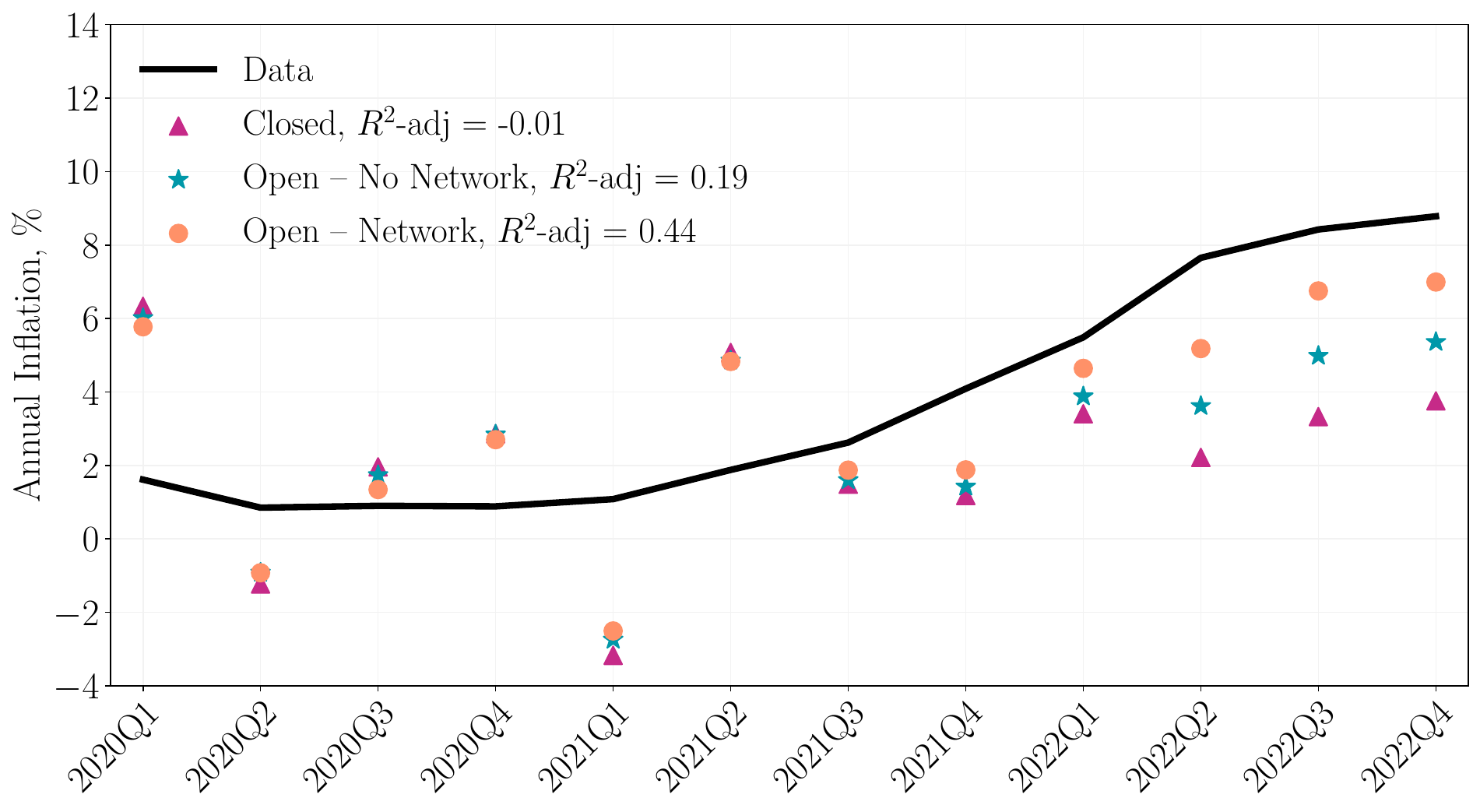}
    \Fignote{This figure shows inflation in the data and the one implied by the different models. The black line is the data. The pink triangles correspond to the closed economy model. The green $\ast$ are the small open economy model without production networks, and the orange circles correspond to the small open economy model with production networks.}
\end{figure}

To sum up, this exercise suggests that a small open economy model with production networks better matches inflation moments during 2020 -- 2022 than a closed economy model and a small open economy model without production networks. To be fair, the small open economy with production networks should indeed do better than the other two as it adds a piece of realism missing from these other models, namely, intersectoral linkages. The question is how much. The results here are suggestive evidence that this is quantitatively relevant. Of course, the stylized model has many missing parts, but remarkably, such a stylized exercise matches these inflation data moments well.

\begin{table}[htb!]
    \centering
        \caption{Average Inflation, 2020 -- 2022.}
    \label{tab:moments_data_model}
    \begin{tabular}{l c c c c}
    \toprule
    & \multicolumn{2}{c}{Panel (a): Chile} & \multicolumn{2}{c}{Panel (b): United Kingdom}\\
    \cmidrule{2-3} \cmidrule{4-5}
    & Mean & Std. Dev. & Mean & Std. Dev\\
    \cmidrule{2-5}
    Data & 6.13 & 3.89 & 3.69 & 3.11 \\
    \\
    \emph{Model} & \\
       Closed  & 0.98 & 9.69& 2.27 & 2.57\\
        SOE no Network & 1.45 & 6.88& 2.72 & 2.64\\
        SOE - Network & 2.41 & 6.67 & 3.21 & 3.00\\
        \bottomrule
    \end{tabular}
\Fignote{This table shows the mean and standard deviation of inflation for the data and the different models. The Closed model uses the implied elasticities as if we consider the economies as closed. The SOE no network model considers the elasticities in a small open economy that does not feature any production network. Finally, the SOE - Network model accounts for network linkages.}
\end{table}
\section{Conclusion}\label{sec:conclusion}
I study inflation in small open economies with production networks. Theoretically and empirically, I show that production networks matter for the effect of sectoral technology shocks, factor prices, and import prices on CPI inflation. I argue that the reason why the interaction of trade and production network matters is because opening up the economy is one of the ways to break the equivalence between what is produced within borders and what is consumed by the domestic consumer, for whom the CPI is relevant. Once we break that relationship, the production network amplifies this discrepancy via indirect linkages: Non-exporters become indirect exporters, while non-importers become indirect importers. This ultimately affects the exposure of the domestic consumer to a different set of shocks. The production network thus has a first order impact on inflation, with sales and factor shares no longer being sufficient statistics to study how changes in sectoral technology or factor prices pass through to inflation, as it would be the case in a closed economy. 

In a small open economy with production networks, indirect exporting dampens domestic shocks relative to an otherwise equivalent closed economy. Foreign shocks, such as import price shocks, are amplified relative to an otherwise equivalent small open economy without production networks. Which channels dominate at the aggregate level depends on the production network structure and is, in the end, a quantitative question. I show that the production network adjustments on both the export and import side matter quantitatively across a set of small open economies. I apply the small open economy model with production networks to the recent inflationary episode in Chile and the United Kingdom. Including production networks helps better match the mean and variance of inflation of these countries between 2020 and 2022. 


\newpage
\singlespacing
\let\normalsize\small
\small
\bibliography{references}
\newpage
\appendix 

\clearpage 
\newpage
\section{Proofs}

\subsection{Proof of Proposition \ref{prop1}} \label{proof:prop1}
By definition, changes in the aggregate price index satisfy
\begin{align}
    \widehat{P} &= \bar{\bm{b}}_D^T \widehat{\bm{P}}_D + \bar{\bm{b}}_M^T \widehat{\bm{P}}_M
\end{align}

By Shephard's lemma the vector of domestic prices can be written as
\begin{align*}
    \widehat{\bm{P}}_D& = - \widehat{\bm{Z}}+ \bm{A} \widehat{\bm{W}} + \bm{\Omega}\widehat{\bm{P}}_D  + \bm{\Psi}_D \bm{\Gamma} \widehat{\bm{P}}_M,
\end{align*}
where this result follows since in equilibrium $MC_i = P_i$ for all $i = 1,2,...,N$.

Inverting this system yields
\begin{align*}
    \widehat{\bm{P}}_D& = - \bm{\Psi}_D\widehat{\bm{Z}}+ \bm{\Psi}_D \bm{A} \widehat{\bm{W}} + \bm{\Psi}_D \bm{\Gamma} \widehat{\bm{P}}_M
\end{align*}
Using the definitions and equilibrium results 
\begin{align}
 \bar{\bm{b}}_D^T \widehat{\bm{P}}_D &=    \bm{\bar{b}}_D^T\left[-\bm{\Psi}_D \widehat{\bm{Z}} + \bm{\Psi}_D \bm{A} \widehat{\bm{W}} + \bm{\Psi}_D \bm{\Gamma} \widehat{\bm{P}}_M  \right]
\end{align}
Thus CPI changes can be written as 
\begin{align}
    \widehat{P}&= -\bm{\bar{b}}_D^T\bm{\Psi}_D \widehat{\bm{Z}} + \bm{\bar{b}}_D^T\bm{\Psi}_D \bm{A} \widehat{\bm{W}} + ( \bar{\bm{b}}^T_M + \bar{\bm{b}}_D^T\bm{\Psi}_D \bm{\Gamma})\widehat{\bm{P}}_M
\end{align}
To get the expression in the text, simply note that goods market clearing condition and the definition of factor shares can be written as 
\begin{align*}
    \bm{\bar{\lambda}}^T&= (\bar{\bm{b}}_D^T + \bar{\bm{x}}^T)\bm{\Psi}_D\Longrightarrow \bar{\bm{b}}_D^T\bm{\Psi}_D = \bm{\bar{\lambda}}^T - \bar{\bm{x}}^T\bm{\Psi}_D \\
    \bm{\bar{\Lambda}}^T&= \bm{\bar{\lambda}}^T\bm{A} \Longrightarrow \bar{\bm{b}}_D^T\bm{\Psi}_D\bm{A} = \bm{\bar{\Lambda}}^T -\bar{\bm{x}}^T\bm{\Psi}_D\bm{A}. 
\end{align*}
Replacing these expressions with changes in the CPI index yields the result.

\subsection{Proof of Proposition \ref{prop2}} \label{proof:prop2}

To prove this proposition, notice that Hulten's theorem in an efficient closed economy with inelastic factor supplies implies that changes in real GDP must satisfy 
\begin{align*}
    \widehat{Y}&= \bm{\lambda}^T\widehat{\bm{Z}} + \bm{\Lambda}^T\widehat{\bm{L}}.
\end{align*}
In turn, changes in factor supply have to comply $\widehat{\bm{L}}= -\widehat{\bm{W}} + \widehat{nGDP}\bm{1}_F + \widehat{\bm{\Lambda}}$, which upon replacing this expression above I get 
\begin{align*}
  \widehat{Y}  &=\bm{\lambda}^T\widehat{\bm{Z}}+ \widehat{nGDP} - \bm{\Lambda}^T\widehat{\bm{W}}.
\end{align*}
The definition of changes in nominal GDP has a quantity and a price component. The quantity component refers to $\widehat{Y}$ (real GDP). The price component is the GDP deflator, which is equivalent to CPI in a closed economy\footnote{This provided that we interpret CPI broadly to include all final uses different from intermediate inputs, such as investment, government expenditure, and so on.}. This allows us to write changes in nominal GDP as 
\begin{align*}
    \widehat{nGDP} &=  \widehat{Y} + \widehat{CPI}\Longrightarrow 
     \widehat{CPI} = \widehat{nGDP} -  \widehat{Y} 
\end{align*}
Combining the above with the changes in real GDP as a function of productivity, factor prices, and nominal GDP, I get
\begin{align*}
   \widehat{CPI} &= \widehat{nGDP} - (\bm{\lambda}^T\widehat{\bm{Z}}+ \widehat{nGDP} - \bm{\Lambda}^T\widehat{\bm{W}}) = -\bm{\lambda}^T\widehat{\bm{Z}} + \bm{\Lambda}^T\widehat{\bm{W}},
\end{align*}
which was the desired expression.

\subsection{Proof of Proposition \ref{prop3}} \label{proof:prop3}

As stated in the text
\begin{align*}
      \widehat{\bm{W}} &= \widehat{\bm{\bar{\Lambda}}} + \bm{1}_F \widehat{\mathcal{M}} - \widehat{\bm{\bar{L}}}.  
\end{align*}

Multiplying the above expression by the weight on wages in CPI from equation (\ref{eq:main}), I get 
\begin{small}
\begin{align*}
  \left(\bm{\bar{\Lambda}}^T - \bm{\bar{x}}^T \bm{\Psi}_D\bm{A}\right)\widehat{\bm{W}} & =  \left(\bm{\bar{\Lambda}}^T - \bm{\bar{x}}^T \bm{\Psi}_D\bm{A}\right)(\widehat{\bm{\bar{\Lambda}}} + \bm{1}_F \widehat{\mathcal{M}} - \widehat{\bm{\bar{L}}})
\end{align*}
\end{small}

Now note that, in general, the budget constraint of a consumer in a small open economy can be written as 
\begin{align*}
    \mathcal{M} + T &= GDP\Longrightarrow \widehat{nGDP} = \frac{\mathcal{M}}{nGDP}\widehat{\mathcal{M}}+ \frac{\mathrm{d}T}{nGDP}.
\end{align*}
By definition, 
\begin{align*}
    \Lambda_f &= \bar{\Lambda}_f \frac{\mathcal{M}}{nGDP}\Longrightarrow \widehat{\bar{\Lambda}}_f = \widehat{\Lambda}_f - \widehat{\mathcal{M}} +\widehat{nGDP},\\
    \bm{\bar{\Lambda}}^T\bm{1}_F = \sum\limits_{f\in F}\bar{\Lambda}_f &= \frac{nGDP}{\mathcal{M}} =\frac{\mathcal{M} + T}{\mathcal{M}}   \Longrightarrow  \bm{\bar{\Lambda}}^T\bm{\bar{\Lambda}} = \sum\limits_{f \in F}\mathrm{d}\bar{\Lambda}_f = \mathrm{d}\left(1 + \frac{T}{\mathcal{M}}\right) = \frac{\mathrm{d}T }{\mathcal{M}} - \frac{T}{\mathcal{M}}\widehat{\mathcal{M}}
\end{align*}

Then, 
\begin{align*}
     \left(\bm{\bar{\Lambda}}^T - \bm{\bar{x}}^T \bm{\Psi}_D\bm{A}\right)\widehat{\bm{W}} & = \bm{\bar{\Lambda}}^T\widehat{\bm{\bar{\Lambda}}}  - \bm{\bar{x}}^T \bm{\Psi}_D\bm{A}\widehat{\bm{\bar{\Lambda}}} + \left(\bm{\bar{\Lambda}}^T - \bm{\bar{x}}^T \bm{\Psi}_D\bm{A}\right)\bm{1}_F \widehat{\mathcal{M}} - \left(\bm{\bar{\Lambda}}^T - \bm{\bar{x}}^T \bm{\Psi}_D\bm{A}\right)\widehat{\bm{\bar{L}}}\\
     &= \frac{\mathrm{d}T }{\mathcal{M}} - \frac{T}{\mathcal{M}}\widehat{\mathcal{M}}  - \bm{\bar{x}}^T \bm{\Psi}_D\bm{A}\widehat{\bm{\bar{\Lambda}}} + \left(\bm{\bar{\Lambda}}^T - \bm{\bar{x}}^T \bm{\Psi}_D\bm{A}\right)\bm{1}_F \widehat{\mathcal{M}} - \left(\bm{\bar{\Lambda}}^T - \bm{\bar{x}}^T \bm{\Psi}_D\bm{A}\right)\widehat{\bm{\bar{L}}}\\
     &= \frac{\mathrm{d}T }{\mathcal{M}} - \bm{\bar{x}}^T \bm{\Psi}_D\bm{A}\widehat{\bm{\bar{\Lambda}}} + \left(\left(\bm{\bar{\Lambda}}^T - \bm{\bar{x}}^T \bm{\Psi}_D\bm{A}\right)\bm{1}_F - \frac{T}{\mathcal{M}}\right) \widehat{\mathcal{M}} - \left(\bm{\bar{\Lambda}}^T - \bm{\bar{x}}^T \bm{\Psi}_D\bm{A}\right)\widehat{\bm{\bar{L}}}\\
     &= \frac{\mathrm{d}T }{\mathcal{M}} - \bm{\bar{x}}^T \bm{\Psi}_D\bm{A}\widehat{\bm{\bar{\Lambda}}} + \left(\frac{\mathcal{M} + T}{\mathcal{M}} - \bm{\bar{x}}^T \bm{\Psi}_D\bm{A}\bm{1}_F - \frac{T}{\mathcal{M}}\right) \widehat{\mathcal{M}} - \left(\bm{\bar{\Lambda}}^T - \bm{\bar{x}}^T \bm{\Psi}_D\bm{A}\right)\widehat{\bm{\bar{L}}}\\
     &= \frac{\mathrm{d}T }{\mathcal{M}}  + \left(1- \bm{\bar{x}}^T \bm{\Psi}_D\bm{A}\bm{1}_F\right) \widehat{\mathcal{M}}- \bm{\bar{x}}^T \bm{\Psi}_D\bm{A}\widehat{\bm{\bar{\Lambda}}} - \left(\bm{\bar{\Lambda}}^T - \bm{\bar{x}}^T \bm{\Psi}_D\bm{A}\right)\widehat{\bm{\bar{L}}}
\end{align*}
Replacing this expression into equation (\ref{eq:main}), I get 
\begin{align}
    \widehat{P} &= -\left(\bm{\bar{\lambda}}^T - \tilde{\bm{\lambda}}^T\right)\widehat{\bm{Z}} - \underbrace{\tilde{\bm{\Lambda}}^T\widehat{\bm{\bar{\Lambda}}} - \left(\bar{\bm{\Lambda}}^T - \bm{\tilde{\bm{\Lambda}}}^T\right)\widehat{\bm{\bar{L}}} + \frac{\mathrm{d}T}{\mathcal{M}} + \left(1 - \tilde{\bm{\Lambda}}^T\bm{1}_F\right) \widehat{\mathcal{M}}}_{\text{Factor price changes}}\nonumber \\ 
    & +  \left( \bm{\bar{b}}_M^T + \bm{\bar{b}}_D^T \bm{\Psi}_D\bm{\Gamma}\right)\widehat{\bm{P}}_M, 
\end{align}
which is the expression in the main text.
\section{Two period model}\label{app:two_period}
In this section, I justify using the one-period model in the main text. The intuition from the main text is unchanged in this more complicated model.

Suppose there are two periods, $0$ and $1$. There is no uncertainty. The consumer has preferences over the consumption bundle in both periods according to some utility function $U(C)$. The consumer can also access an internationally traded bond that pays a nominal interest rate $i^*_t$. This nominal interest rate is \emph{exogenous} from the perspective of the small open economy. 

The consumer's budget constraint at times 0 and 1 read 
\begin{align*}
    P_0C_0 + \mathcal{E}_0 B_0 &= (1+i^*_{-1})\mathcal{E}_0B_{-1} + nGDP_0,\\
    P_1C_1 + \mathcal{E}_1 B_1 &= (1+i^*_{0})\mathcal{E}_1B_{0} + nGDP_1,
\end{align*}
where $P_t$ is the price index at time $t$, $C_t$ is consumption at time $t$, $B_t$ denotes asset holdings in foreign currency at time $t$, $\mathcal{E}_t$ is the nominal exchange rate at time $t$ defined as local currency per unit of foreign currency, and $nGDP_t$ denotes nominal GDP at time $t$.

Combining the two budget constraints gives the intertemporal budget constraint 
\begin{align*}
    P_0C_0 + \frac{P_1C_1}{\frac{\mathcal{E}_1}{\mathcal{E}_0}(1+i^*_0)}&= (1+i^*_{-1})\mathcal{E}_0 B_{-1}+nGDP_0 + \frac{nGDP_1}{\frac{\mathcal{E}_1}{\mathcal{E}_0}(1+i^*_0)}
\end{align*}
Under perfect mobility of capital flows, we have the \emph{no-arbitrage condition}\footnote{Under uncertainty, this is simply the uncovered interest parity condition (UIP). }
\begin{align}
    (1+i_0) &= (1+i^*_0)\frac{\mathcal{E}_1}{\mathcal{E}_0}. \label{eq:no_arbitrage}
\end{align}
Where given perfect foresight, there is no expectation regarding the future level of the exchange rate.
I can also get this condition by adding a domestic bond in zero net supply. This does not change any of the conclusions below.

The no-arbitrage condition is important for the small open economy as it clearly illustrates the fact that the Central bank has two instruments to set the nominal interest rate $i_0$: it can either choose $i_0$ directly and let the exchange rate $\mathcal{E}_0$ adjust. Or it can pick $\mathcal{E}_0$ and let the nominal interest rate accommodate to comply with that rule.

We can thus rewrite the maximization problem as solving the following program 
\begin{footnotesize}
\begin{align}
    \max_{C_0,C_1}\qquad U(C_0) + \beta U(C_1)\qquad \text{s.t}\quad  P_0C_0 + \frac{P_1C_1}{(1+i_0)}&= (1+i^*_{-1})\mathcal{E}_0 B_{-1}+nGDP_0 + \frac{nGDP_1}{(1+i_0)}
\end{align}
\end{footnotesize}
Letting $\lambda$ be the multiplier on the intertemporal budget constraint, I have 
\begin{align*}
     U'(C_0) &= \lambda P_0\\
    \beta U'(C_1) &= \lambda \frac{P_1}{(1+i_0)}
\end{align*}
Combining both equations 
\begin{align*}
\frac{U'(C_0)}{P_0} &= \beta \frac{U'(C_1)}{P_1}(1+i_0)
\end{align*}
Assume $U(C) = \log C$, as in \cite{GL07} and \cite{BF22}, so that the intertemporal elasticity of substitution is unitary.  Then 
\begin{align*}
    \beta P_0C_0 (1+i_0)&= P_1C_1
\end{align*}
As argued in \cite{BF22}, who in turn relied on an argument made in \cite{Krugman98} and \cite{EK12}, we can understand this model as assuming that anything happening at $t = 1$ is labeled as ``future". The assumption here is isomorphic to an infinite horizon model where an unexpected shock happens at $t =0$, and the economy returns to the long-run equilibrium from $t=1$ onwards. For all practical purposes, this means I assume that any variables at $t=1$ are exogenously given. Then, from the Euler equation, I have 
\begin{align*}
    P_0C_0 &= \frac{P_1C_1}{\beta(1+i_0)} = \frac{P_1C_1}{\beta (1+i^*_0)}\frac{\mathcal{E}_0}{\mathcal{E}_1}
\end{align*}
Therefore, since ($\beta, i^*_0, P_1C_1,\mathcal{E}_1$) are exogenous, the nominal exchange rate $\mathcal{E}_0$ is determined via the no-arbitrage condition, equation (\ref{eq:no_arbitrage}). This equation provides a value for current expenditure in local currency, $P_0C_0$. 

For simplicity, suppose $B_{-1} = 0$. Replacing the Euler equation in the intertemporal budget constraint. 
\begin{align*}
    P_0C_0 + \frac{P_1C_1}{(1+i_0)}&= nGDP_0 + \frac{nGDP_1}{(1+i_0)}\\
    P_0C_0 + \frac{\beta P_0C_0 (1+i_0)}{(1+i_0)}&= nGDP_0 + \frac{nGDP_1}{(1+i_0)}\\
    P_0 C_0 &= \frac{1}{(1+\beta)}\left(nGDP_0 + \frac{nGDP_1}{(1+i_0)} \right)\\
     P_0 C_0 &= \frac{1}{(1+\beta)}\left(nGDP_0 + \frac{nGDP_1}{(1+i^*_0)}\frac{\mathcal{E}_0}{\mathcal{E}_1} \right).
\end{align*}
Note that given $(\mathcal{E}_0, P_0C_0, nGDP_1, i_0)$ from the Euler equation and no-arbitrage condition, the latter equation pins down $nGDP_0$.
\subsection{Solving for consumption at time 0, $C_0$.} Before solving for consumption, let me introduce the real exchange rate, $\mathcal{Q}_0$ as 

\begin{align*}
    \mathcal{Q}_0 &= \frac{\mathcal{E}_0P^F_0}{P_0},
\end{align*}
where $P^F_0$ is the rest of the world price index, which is exogenous from the perspective of the small open economy. Following, \cite{SGUW22}, I can write this foreign price index as $P^F_0 = \mathcal{P}^F(\lbrace P^*_{m,0}\rbrace_{m \in M}, \lbrace P^*_{j,0}\rbrace_{j \in N^*})$, where $M$ is the same set of goods that are imported by the small open economy and $N^*$ is the set of all other goods consumed abroad by the foreign economy. Since I assume that all prices $P^*_k$ for $k \in M \cup N^*$ are exogenous from the perspective of the small open economy, this allows me to write 
\begin{align*}
    \frac{P^F_0}{P^*_{m_0,0}} = \mathcal{P}^F(1, \lbrace P^*_{m,0}/P^*_{0,0}\rbrace_{m \in M\ 0}, \lbrace P^*_{j,0}/P^*_{0,0}\rbrace_{j \in N^*})
\end{align*}
Using this in the definition of the real exchange rate and the law of one price for imported goods $P_{m,0} = \mathcal{E}_0 P^*_{m,0}$ for all $m \in M$, then 
\begin{align*}
    \mathcal{Q}_0 &= \frac{P^F_0/P^*_{m_0,0}}{P_0/P_{m_0,0}}
\end{align*}
Since the numerator is exogenous, the real exchange rate can be written as the relative price of $CPI$ to one of the imported goods $P_{m_0,0}$. Let set $P^F_0/P^*_{m_0,0} = 1$, then 
\begin{align*}
    \mathcal{Q}_0 &= \frac{P_{m_0,0}}{P_0}
\end{align*}
Note that consumption at time 0 is only a function of this real exchange rate since from the Euler equation 
\begin{align*}
    C_0 &=  \frac{P_1C_1}{\beta \mathcal{E}_1(1+i^*_0)}\frac{\mathcal{E}_0}{P_0} = \frac{P_1C_1}{\beta \mathcal{E}_1(1+i^*_0)}\mathcal{Q}_0 = \frac{E_1}{\beta \mathcal{E}_1 (1+i^*_0)} \mathcal{Q}_0 
\end{align*}
Changes in the real exchange rate represent changes in all prices relative to the imported good $m_0$.

\begin{align*}
    \widehat{P}_0 - \widehat{P}_{m_0,0} &= \sum\limits_{i \in N}\frac{P_iC_i}{PC} (\widehat{P}_i-\widehat{P}_{m_0,0}) + \sum\limits_{m \in M}\frac{P_mC_m}{PC} (\widehat{P}_m-\widehat{P}_{m_0,0}) = \bm{\bar{b}}_D^T (\bm{\widehat{P}}_D-\bm{1}_N\widehat{P}_{m_0,0}) + \bm{\bar{b}}_M^T (\widehat{\bm{P}}^*_M - \bm{1}_M \widehat{P}^*_{m_0,0})\\
    \widehat{P}_0 - \widehat{P}_{m_0,0}&= \bm{\bar{b}}_D^T (-\bm{\Psi}\widehat{\bm{Z}} + \bm{\Psi}\bm{A}(\widehat{\bm{W}}-\bm{1}_F\widehat{P}_{m_0,0}) + \bm{\Psi}\bm{\Gamma}(\widehat{\bm{P}}^*_M-\bm{1}_M\widehat{P}^*_{m_0,0})) + \bm{\bar{b}}_M^T (\widehat{\bm{P}}^*_M-\bm{1}_M\widehat{P}^*_{m_0,0}) \\
    \widehat{P}_0 - \widehat{P}_{m_0,0}&= -\bm{\bar{b}}_D^T\bm{\Psi}\widehat{\bm{Z}} + \bm{\bar{b}}_D^T\bm{\Psi}\bm{A}(\widehat{\bm{W}}-\bm{1}_F\widehat{P}_{m_0,0}) + (\bm{\bar{b}}_D^T\bm{\Psi}\bm{\Gamma} +\bm{\bar{b}}_M^T)(\widehat{\bm{P}}^*_M-\bm{1}_M\widehat{P}^*_{m_0,0})\\
    \widehat{Q}_0 &= -(\widehat{P}_0 - \widehat{P}_{m_0,0}) = \bm{\bar{b}}_D^T\bm{\Psi}\widehat{\bm{Z}} - \bm{\bar{b}}_D^T\bm{\Psi}\bm{A}(\widehat{\bm{W}}-\bm{1}_F\widehat{P}_{m_0,0}) - (\bm{\bar{b}}_D^T\bm{\Psi}\bm{\Gamma} +\bm{\bar{b}}_M^T)(\widehat{\bm{P}}^*_M-\bm{1}_M\widehat{P}^*_{m_0,0})
\end{align*}
It follows that consumption changes satisfy 
\begin{align*}
    \widehat{C}_0 &= \widehat{E}_1 - \widehat{\beta} - \widehat{\mathcal{E}}_1 - \widehat{(1+i^*_0)}    + \widehat{Q}_0\\
    &= \widehat{E}_1 - \widehat{\beta} - \widehat{\mathcal{E}}_1 - \widehat{(1+i^*_0)} + \bm{\bar{b}}_D^T\bm{\Psi}\widehat{\bm{Z}} - \bm{\bar{b}}_D^T\bm{\Psi}\bm{A}(\widehat{\bm{W}}-\bm{1}_F\widehat{P}_{m_0,0}) - (\bm{\bar{b}}_D^T\bm{\Psi}\bm{\Gamma} +\bm{\bar{b}}_M^T)(\widehat{\bm{P}}^*_M-\bm{1}_M\widehat{P}^*_{m_0,0}),
\end{align*}
this illustrates how consumption at time $0$ is not pinned down by real GDP, $Y_0$, as is the case in the closed economy model. Rather, it is pinned down by the real exchange rate, $\mathcal{Q}_0$, which in turn depends on factor prices in units of good $m_0$.

Using the fact that $P_0C_0 = E_0$ is given once I set either $i_0$ or $\mathcal{E}_0$, then changes in the price index satisfy 
\begin{align*}
    \widehat{P}_0 &= \widehat{E}_0 - \widehat{C}_0\\
    &= \widehat{E}_0 - \widehat{E}_1 + \widehat{\beta} + \widehat{\mathcal{E}}_1 + \widehat{(1+i^*_0)} - \bm{\bar{b}}_D^T\bm{\Psi}\widehat{\bm{Z}} + \bm{\bar{b}}_D^T\bm{\Psi}\bm{A}(\widehat{\bm{W}}-\bm{1}_F\widehat{P}_{m_0,0}) + (\bm{\bar{b}}_D^T\bm{\Psi}\bm{\Gamma} +\bm{\bar{b}}_M^T)(\widehat{\bm{P}}^*_M-\bm{1}_M\widehat{P}^*_{m_0,0})\\
     &= \widehat{\mathcal{E}}_1 + \widehat{(1+i^*_0)} - \widehat{(1+i_0)}   - \bm{\bar{b}}_D^T\bm{\Psi}\widehat{\bm{Z}} + \bm{\bar{b}}_D^T\bm{\Psi}\bm{A}(\widehat{\bm{W}}-\bm{1}_F\widehat{P}_{m_0,0}) + (\bm{\bar{b}}_D^T\bm{\Psi}\bm{\Gamma} +\bm{\bar{b}}_M^T)(\widehat{\bm{P}}^*_M-\bm{1}_M\widehat{P}^*_{m_0,0})
\end{align*}

\paragraph{From real wages to aggregate demand and inelastic labor supply changes.}
To solve the model in terms of factor quantities, let me define real wages, in terms of imported good $m_0$, as a function of these, factor shares, and the import price (denominated in foreign currency)
\begin{align*}
    \widehat{W}_{f,0} - \widehat{P}_{m_0,0} &= \widehat{\bar{\Lambda}}_f -\widehat{\bar{L}}_{f,0} + \widehat{E}_0 -\widehat{P}_{m_0,0}\\
    &= \widehat{\bar{\Lambda}}_f -\widehat{\bar{L}}_{f,0} + (\widehat{E}_0 -\widehat{\mathcal{E}}_0) - \widehat{P}^*_{m_0,0}\\
    \widehat{\bm{W}} - \bm{1}_F \widehat{P}_{m_0,0}&= \widehat{\bm{\bar{\Lambda}}} - \widehat{\bar{\bm{L}}} + \bm{1}_F((\widehat{E}_0 -\widehat{\mathcal{E}}_0) - \widehat{P}^*_{m_0,0}),
\end{align*}
where the second line follows from the law of one price. 

Note that expenditure denominated in foreign currency is exogenous from the Euler equation
\begin{align*}
    \frac{E_0}{\mathcal{E}_0} &= \frac{E_1}{\mathcal{E}_1}\frac{1}{\beta (1+i^*_0)}.
\end{align*}

Thus $(\widehat{E}_0 -\widehat{\mathcal{E}}_0)$ represents an \emph{aggregate demand shifter}.
It increases if the consumer becomes more impatient ($\beta$ declines), future expenditure in local currency increases ($E_1 = P_1C_1)$, the interest rate in foreign currency declines, $(1+i^*_0)$, or the exchange rate in the future goes down, $\mathcal{E}_1$.  

Set $\widehat{P}^*_{m0,0} = 0$ to simplify the exposition. Combining expenditure at time $0$ in foreign currency with the expression for CPI changes, I get 
\begin{align*}
    \widehat{P}_0 &= \widehat{\mathcal{E}}_0 - \bm{\bar{b}}_D^T\bm{\Psi}\widehat{\bm{Z}} + \bm{\bar{b}}_D^T\bm{\Psi}\bm{A}\left(\widehat{\bm{\bar{\Lambda}}} - \widehat{\bar{\bm{L}}} + \bm{1}_F(\widehat{E}_0 -\widehat{\mathcal{E}}_0)\right) + (\bm{\bar{b}}_D^T\bm{\Psi}\bm{\Gamma} +\bm{\bar{b}}_M^T)\widehat{\bm{P}}^*_M,
\end{align*}
which can be written as 
\begin{align}
    \widehat{P}_0  &= \underbrace{\widehat{\mathcal{E}}_0}_{\text{Nominal Anchor}} + \underbrace{\bm{\bar{b}}_D^T\bm{\Psi}\bm{A}\bm{1}_F(\widehat{E}_0 -\widehat{\mathcal{E}}_0)}_{\text{Aggregate demand shifter}} - \underbrace{\bm{\bar{b}}_D^T\bm{\Psi}\widehat{\bm{Z}}}_{\text{Technology effects}} + \underbrace{\bm{\bar{b}}_D^T\bm{\Psi}\bm{A}\widehat{\bm{\bar{\Lambda}}}}_{\text{Factor share reallocation}} \nonumber\\
    &-\underbrace{\bm{\bar{b}}_D^T\bm{\Psi}\bm{A} \widehat{\bar{\bm{L}}}}_{\text{Factor supplies}}  + \underbrace{(\bm{\bar{b}}_D^T\bm{\Psi}\bm{\Gamma} +\bm{\bar{b}}_M^T)\widehat{\bm{P}}^*_M}_{\text{Import price channel}},
\end{align}
where I used the fact the no-arbitrage condition, implies that changes in the nominal exchange rate at time $0$ can be written as 
\begin{align*}
    \widehat{(1+i_0)}&= \widehat{(1+i^*_0)} + \widehat{\mathcal{E}}_1 - \widehat{\mathcal{E}}_0\Longrightarrow \widehat{\mathcal{E}}_0 = \widehat{(1+i^*_0)} + \widehat{\mathcal{E}}_1 - \widehat{(1+i_0)}.
\end{align*}
This provides a nominal anchor at time $0$.

\subsection{Mapping the net transfer, $T$.}
We can use the two-period model above to justify the exogenous net transfer in the static setup $T$. To see this, write
\begin{align*}
    T &= nGDP_0 - P_0C_0 = (1+\beta)P_0C_0 - \frac{nGDP_1}{(1+i^*_0)}\frac{\mathcal{E}_0}{\mathcal{E}_1} - P_0C_0\\
    T &= \beta P_0C_0 - \frac{nGDP_1}{(1+i^*_0)}\frac{\mathcal{E}_0}{\mathcal{E}_1} = \beta P_0C_0 - \beta \frac{nGDP_1}{P_1C_1}P_0C_0  = \beta P_0C_0\left(1 - \frac{nGDP_1}{P_1C_1}\right),
\end{align*}
this net transfer is positive or negative depending on whether nominal GDP in the future is higher or lower than future expenditure. This ultimately hinges on the difference between future consumption and income since if $nGDP_1/P_1C_1>1$; this ratio is negative, meaning $T<0$. A negative net transfer means the economy receives resources at time 0 that do not come from their own production at time $0$. In an intertemporal model, this comes from future resources. In a static model, this should come from the rest of the world. The converse also holds. Of course, if the steady state features no assets holding, this equation collapses to $T = 0$. To see why, note the budget constraint satisfies:
\begin{align*}
PC\left(1 + \frac{1}{(1+i)}\right)&= \mathcal{E}i^* B + nGDP\left(1 + \frac{1}{(1+i)}\right),
\end{align*}
If $B = 0$, then 
\begin{align*}
    PC = nGDP,
\end{align*}
and therefore $T = 0$.

\section{A Small Open Economy Dynamic Model with Production Networks}
\label{appendix:dynamic_model}
In this appendix, I briefly explore how the intuition of the model in the main text extends to a small open economy \emph{dynamic} environment. 
\paragraph{Environment.} The model is a variant of the canonical importable, exportable, and non-tradable model (MXN) as in Chapter 8 of \cite{SGU17Book}, where I remove capital from the model. 

Time is discrete, indexed by $t$, and runs forever. Households consume exportables, importables, and non-tradables. Non-tradable and exportable goods are produced using labor and intermediate inputs from other sectors. This intermediate input-output structure is one of the novelties of the model. 

Financial markets are incomplete. The domestic household has access to two bonds. The first is a domestic bond denominated in local currency and pays a nominal interest rate $i_t$. The second is a foreign bond denominated in foreign currency and pays a nominal interest rate $i^*_t$. The small open economy takes this latter foreign interest rate as given. 
\paragraph{Household.} The household owns labor and consumes the three goods. Labor is supplied inelastically. The consumption aggregator is of the CES form 
\begin{align}
    C &= \left(b_N^{\frac{1}{\chi}} C_N^{\frac{\chi - 1}{\chi}} +b_M^{\frac{1}{\chi}} C_M^{\frac{\chi - 1}{\chi}}  + (1-b_N-b_M)^{\frac{1}{\chi}} C_X^{\frac{\chi - 1}{\chi}}\right)^{\frac{\chi}{\chi - 1}},
\end{align}
where $C_N$ is consumption of non-tradables, $C_M$ is consumption of importables, and $C_X$ is consumption of the exportable good. $(b_N, b_M)$ are the expenditure shares on non-tradable and importables at the symmetric price steady-state. In turn, $1-b_N-b_M$ is the expenditure share on exportable goods. Finally, $\chi$ is the elasticity of substitution across the different goods.

I solve the household's problem in two steps. In the first step, I solve for the dynamic path of $\lbrace C_t, B_t, B^*_t\rbrace_{t=0}^\infty$. Conditional on knowing the path of $C_t$, we can solve for $(C_{Nt}, C_{Mt}, C_{Xt})$ at each instant $t$. This way of solving the problem allows me to simplify the exposition and is the route followed in, for example, Chapter 4 of \cite{OR96}.\footnote{Specifically, see Section 4.4.1. of \cite{OR96}.}

The dynamic problem of the household is as follows. Taking as given paths of prices $\lbrace P_t, W_t, \mathcal{E}_t\rbrace $, interest rates $\lbrace i_t, i^*_t\rbrace$, and labor endowment $\lbrace \bar{L}_t\rbrace $, the consumer solves the following program 
\begin{align*}
    \max_{\lbrace C_t, B_t, B^*_t \rbrace_{t=0}^\infty}&\mathbb{E}_0\sum\limits_{t=0}^\infty \beta^t \frac{C_t^{1-\sigma}-1}{1-\sigma}  \\
    \text{subject to } P_tC_t + \mathcal{E}_t B^*_t + B_t &\leq W_t\bar{L}_t + (1+i^*_{t-1})\mathcal{E}_tB^*_{t-1} +  (1+i_{t-1})B_{t-1},
\end{align*}
where $P_t$ is the consumer's price index, $B^*_t$ is assets holdings of a foreign bond, $B_t$ is assets holding of a domestic bond, $\mathcal{E}_t$ is the nominal exchange rate defined as units of home currency per unit of foreign currency, $W_t$ is the wage rate, $(1+i_t)$ is the gross interest rate in the domestic bond and $(1+i^*_t)$ is the gross interest rate in the foreign bond. $\beta \in (0,1)$ is a discount factor. 

Letting $\beta^t\mu_t$ be the Lagrange multiplier on the flow budget constraint, we get the following first-order conditions
\begin{align}
    C_t: C_t^{-\sigma}&= \lambda_tP_t,\\
    B_t: \lambda_t &= \beta (1+i_t)\mathbb{E}_t\lambda_{t+1},\\
    B^*_t: \lambda_t &= \beta (1+i^*_t)\mathbb{E}_t\lambda_{t+1} \frac{\mathcal{E}_{t+1}}{\mathcal{E}_t},
\end{align}
plus the budget constraint.

Conditional on $C$, the intratemporal problem solves the following program (ignoring time subscripts)
\begin{align}
    \min_{C_N, C_M, C_X} P_NC_N + P_MC_M + P_XC_X \text{ subject to } C \geq \bar{C}.
\end{align}
That is, taking as given prices $(P_N, P_M, P_X)$, a consumption aggregator function $C$ and a level of aggregate consumption $\bar{C}$; the household minimizes its expenditure to achieve $\bar{C}$.

The conditional demands that solve this problem are 
\begin{align}
    C_N&= b_N \left( \frac{P_N}{P}\right)^{-\chi}C;\quad C_M= b_M \left( \frac{P_M}{P}\right)^{-\chi}C;\quad C_X= (1-b_N-b_M) \left( \frac{P_X}{P}\right)^{-\chi}C,
\end{align}
where the price index $P$ satisfies 
\begin{align}
    P &= (b_NP_N^{1-\chi} + b_MP_M^{1-\chi} + (1-b_N-b_M)P_X^{1-\chi})^{\frac{1}{1-\chi}}
\end{align}
CPI Inflation is thus defined as 
\begin{align}
    \pi_t &= \log P_t - \log P_{t-1}
\end{align}
\paragraph{Production Side.} There are two producing sectors: non-tradable ($N$) and exportable ($X$). I omit time indices whenever it causes no confusion.

Gross output of sector $i \in \lbrace N, X \rbrace$ is of the CES form 
\begin{align}
    Q_i &= Z_i \left(a_i^{\frac{1}{\sigma_i}} L_i^{\frac{\sigma_i-1}{\sigma_i}} + (1-a_i)^{\frac{1}{\sigma_i}} M_i^{\frac{\sigma_i-1}{\sigma_i}} \right)^{\frac{\sigma_i}{\sigma_i-1}},
\end{align}
where $L_i$ is labor demand and $M_i$ is an intermediate input bundle. $\sigma_i$ is the elasticity of substitution between labor and intermediate inputs. $Z_i$ is a sector-specific productivity level. These productivity levels are exogenous. Finally, $a_i$ represents the labor share in total costs (sales) at the symmetric price equilibrium.

The intermediate input bundle ($M_i$) aggregates non-tradable ($M_{iN}$) and tradable ($M_{iT}$) intermediate inputs according to another CES layer 
\begin{align}
    M_{i}&= \left(\omega_i^{\frac{1}{\varepsilon_i}} M_{iN}^{\frac{\varepsilon_i-1}{\varepsilon_i}} + (1-\omega_i)^{\frac{1}{\varepsilon_i}} M_{iT}^{\frac{\varepsilon_i-1}{\varepsilon_i}}\right)^{\frac{\varepsilon_i}{\varepsilon_i - 1}},
\end{align}
where $\varepsilon_i$ is the elasticity of substitution between non-tradable and tradable intermediate inputs. $\omega_i$ represents the expenditure share on non-tradable intermediates out of total intermediate input spending. Therefore, $1-\omega_i$ is the expenditure share on tradable goods.

The tradable intermediate input bundle combines the exportable good and the imported input 
\begin{align}
    M_{iT}&= \left(\omega_{iX}^{\frac{1}{\varepsilon^T_i}}M_{iX}^{\frac{\varepsilon^T_i-1}{\varepsilon^T_i}} + (1-\omega_{iX})^{\frac{1}{\varepsilon^T_i}}M_{iM}^{\frac{\varepsilon^T_i-1}{\varepsilon^T_i}}\right)^{\frac{\varepsilon^T_i}{\varepsilon^T_i - 1}},
\end{align}
where $\varepsilon^T_i$ is the elasticity of substitution between importable and exportable goods. $\omega_{iX}$ is the expenditure share on exportable goods as a share of intermediate spending on tradable goods (both exportable and importable).

Cost minimization at each CES layer delivers the following conditional demands 
\begin{align}
    L_i &= a_i \left(\frac{W}{MC_i} \right)^{-\sigma}Z_i^{\sigma_i - 1}Q_i; \quad M_i = (1-a_i \left(\frac{P^I_i}{MC_i} \right)^{-\sigma}Z_i^{\sigma_i - 1}Q_i\\
    M_{iN}&= \omega_i \left(\frac{P_N}{P^I_i}\right)^{-\varepsilon_i}M_i; \quad M_{iT}= (1-\omega_i) \left(\frac{P^T_i}{P^I_i}\right)^{-\varepsilon_i}M_i\\
    M_{iX}&= \omega_{iX}\left(\frac{P_X}{P^T_i}\right)^{-\varepsilon^T_i}M_{iT};\quad M_{iM}= (1-\omega_{iX})\left(\frac{P_M}{P^T_i}\right)^{-\varepsilon^T_i}M_{iT}
\end{align}
with marginal costs and price indices 
\begin{align}
    MC_i&= Z_i^{-1}\left(a_i W^{1-\sigma_i} + (1-a_i)(P^I_i)^{1-\sigma_i} \right)^{\frac{1}{1-\sigma_i}},\\
    P^I_i&= \left(\omega_i P_N^{1-\varepsilon_i} + (1-\omega_i)(P^T_i)^{1-\varepsilon_i} \right)^{\frac{1}{1-\varepsilon_i}},\\
    P^T_i&= \left(\omega_{iX} P_X^{1-\varepsilon^T_i} + (1-\omega_{iX})(P_M)^{1-\varepsilon^T_i} \right)^{\frac{1}{1-\varepsilon^T_i}},
\end{align}
where $MC_i$ stands for marginal cost, $P^I_i$ is the price index of the intermediate input bundle, and $P^T_i$ is the price index of the tradable intermediate input bundle.

\paragraph{Law of one price.} I assume the law of one price holds for the exportable and importable good. This means 
\begin{align}
    P_{Xt}&= P^*_{Xt}\mathcal{E}_t\; \quad P_{Mt}= P^*_{Mt}\mathcal{E}_t
\end{align}

\paragraph{Nominal Anchor.} Households require local currency to buy the consumption bundles. I introduce this notion as a cash-in-advance constraint. I impose this as an additional aggregate equation 
\begin{align}
    \mathcal{M}_t &= P_tC_t,
\end{align}
where $\mathcal{M}_t$ is an \emph{exogenous} money supply. Since prices are fully flexible and there are no market distortions, imposing this constraint does not affect relative prices and optimal allocations. It allows me to pin down the price level and, as a by-product inflation. This is akin to a nominal GDP targeting rule, which in the open economy is an \emph{expenditure} targeting rule instead.\footnote{I can also specify a model with money in the utility function. As long as money and aggregate consumption preferences are separable,  none of the results I present here change when using such a specification. This illustration closely follows the main text.}

\paragraph{Exogenous processes.} In the model there are 6 exogenous processes $(Z_{Nt}, Z_{Xt}, P^*_{Xt}, P^*_{Mt}, \mathcal{M}_t, \bar{L}_t)$. For the purposes of the exercise, I explore changes in $(Z_{Nt}, P^*_{Mt})$. I assume all processes follow AR(1) in logs. That is 
\begin{align}
    \log Z_{Nt}&= \rho_{Z_N}\log Z_{Nt-1} + \nu^N_{t}\\
    \log P^*_{Mt} &= \rho_{P_M}\log P^*_{Mt-1} + \nu^{P_M}_t,
\end{align}
where $(\nu^{N}_t, \nu^{P_M}_t)$ are disturbances. 
\paragraph{Equilibrium.} Since the domestic bond is traded only within the country, we have that $B_t$ = 0. The non-tradable market clearing condition and the labor market clearing condition is 
\begin{align}
    Q_{Nt}&= C_{Nt} + M_{NNt} + M_{XNt}\\
    \bar{L}_t &= L_{Nt} + L_{Xt}
\end{align}
Combining these three conditions into the consumer's budget constraint, we can write the law of motion for foreign assets as 
\begin{footnotesize}
\begin{align}
    B^*_t&= (1+i^*_{t-1})B^*_{t-1} - \frac{1}{\mathcal{E}_t}\left(P_{Xt}(C_{Xt} + M_{XXt} + M_{NXt} - Q_{Xt}) + P_{Mt}(C_{Mt} + M_{NMt} + M_{XMt}) \right)
\end{align}
\end{footnotesize}

I induce stationarity in this foreign asset position using a debt-elastic interest rate device, as in \cite{SGU03}. This means 
\begin{align}
    i^*_t &= \bar{i}^* +\psi (e^{\bar{B}^* - B^*_t} - 1),
\end{align}
where $\bar{i}^*$ and $\bar{B}^*$ are steady-state values of the interest rate and foreign assets.

\subsection{Calibration and Scenarios}
A period is a year. To assess the role of the production network structure, I consider two different scenarios that vary the exposure and dependence of the exportable and non-tradable sectors to each other via input-output linkages. 

In particular, I consider the following specifications:
\begin{enumerate}
    \item Island: ($\omega_N, \omega_X, \omega_{NX}, \omega_{XX}$) = $(1,0, 0, 0.5)$.
    \item Intersectoral linkages: ($\omega_N, \omega_X, \omega_{NX}, \omega_{XX}$) = $(0, 1, 1, 0.5)$.
\end{enumerate}

The first scenario ignores intersectoral linkages and treats both sectors as islands isolated from each other in the input-output structure. The non-tradable sector only buys intermediate input from the same sector, while the exportable sector buys from itself and the imported good. The idea of this scenario is to shut down intersectoral linkages. The second scenario assumes the non-tradable sector only uses tradable intermediate inputs, while the exportable sector only uses non-tradable intermediate inputs. Importantly, these two scenarios play around with the intermediate expenditure share distribution while keeping the total intermediate expenditure share constant at $1-a_i = 0.33$ in both sectors.

Table \ref{tab:calibration} shows the calibrated shares and parameters kept fixed in both scenarios that I borrowed from the literature. Since the consumption share on tradables (exportables plus importables) is 0.3, I use the estimate expenditure share on importables (relative to tradable expenditure) from Table 8.2 in \cite{SGU17Book}, which is $\chi_m = 0.898$ (in their notation). This implies that the consumption expenditure on importables as a share of total expenditure equals $b_M =  0.3 \times 0.898 = 0.27$. I set all elasticities in production and consumption to be Cobb-Douglas. I do this to highlight the first-order mechanisms, as it is well known that under low elasticities of substitution, the production network amplifies negative shocks on quantities and mitigates positive shocks \cite{BF19}, and thus production networks matter beyond sales shares to a second-order. All remaining parameters are standard in the small open economy literature.
\begin{table}[htbp!]
    \centering
        \caption{Calibration}
    \label{tab:calibration}
    \resizebox{\textwidth}{!}{
    \begin{tabular}{lcll}
    \toprule
    Parameters & Value & Description & Source\\
    \midrule
     \emph{Shares}\\
         $a_N = a_X$& 0.66 & Labor Share & \cite{BCORY13} \\
         $b_N$ & 0.70 & Consumption share on non-tradables & \cite{Bianchi11}\\
         $b_X$ & 0.03 & Consumption share on exportable & Table 8.2 \citep{SGU17Book} \\
         $b_M$ & 0.27 & Consumption share on importable & Table 8.2 \citep{SGU17Book}\\
         \\
         \emph{Elasticities}\\
          $\chi$ & 1 & Elasticity of substitution in consumption & Cobb-Douglas specification\\
         $\sigma$ & 2 & Intertemporal Elasticity of Substitution& Table 8.2 \citep{SGU17Book}\\
         $\sigma_N = \sigma_X$ & 1 & Elasticity between value-added and intermediates & Cobb-Douglas specification\\
         $\varepsilon_N = \varepsilon_X$ & 1 & Elasticity across intermediates & Cobb-Douglas specification\\
         $\varepsilon^T_N  = \varepsilon^T_X$ & 1 & Elasticity across tradable intermediates & Cobb-Douglas specification\\
         \\
         \emph{Other Parameters}\\
          $\rho_{Z_N}=\rho_{P_M}$ & 0.53 & AR(1) non-tradable productivity and import price & Table 7.1 \citep{SGU17Book} \\
         $\bar{B}^*$ & 0 & Steady-state foreign assets position & Zero trade balance\\
          $\psi$ & 0.000742 & Interest rate sensitivity to foreign assets & \cite{SGU03}\\
          $\bar{i}^*$& 0.04 & Steady-state foreign interest rate & \cite{Bianchi11}\\
          $\beta$ & $\frac{1}{(1+i^*)}=0.9615$ & Discount Factor\\
         \bottomrule
    \end{tabular}}
\end{table}

\subsection{Results}
I explore how inflation reacts to a one percent shock negative productivity in the non-tradable sector ($\nu^N_{t}$) and a positive import price shock ($\nu^M_{t}$). I solve the model using a first-order approximation around the non-stochastic steady state to be comparable with the model in the main text. Panel (a) of Figure \ref{fig:inflation_irf} shows the response of inflation to a negative productivity shock in the non-tradable sector, while panel (b) of Figure \ref{fig:inflation_irf} does the same for a positive import price shock. The solid purple line shows the implied impulse-response function for inflation in a model with intersectoral linkages. In contrast, the pink dashed line shows the impulse-response function for the island model.

The dynamic model confirms the intuition of the static model. Panel (a) shows that a decrease in productivity generates inflation, which is lower in an economy with intersectoral linkages (solid line) due to indirect trade. A positive import price shock, on the other hand, increases inflation, more so in a model with intersectoral linkages due to indirect trade. This highlights that the production network matters to a first-order for the productivity pass-through to inflation. Importantly, Domar weights are no longer sufficient statistics for the productivity pass-through to inflation, as they are equal in both scenarios by construction. 

The persistence in the dynamic responses is given by shocks' persistence as the model has no inherent dynamics from the production network structure. Of course, the model does have dynamics from the foreign asset position and thus is richer relative to the static model in the main section. Moreover, a one-time shock at time 0 generates inflation in period 0 but requires deflation thereafter for the price level to recover its steady-state level in the long run.

Finally, inflation volatility due to productivity shocks in the non-tradable sector is lower under the intersectoral model than the island model. The opposite is true when we consider import price shocks. These inflation impulse responses are consistent with the results in Section \ref{sec:application}, where the small open economy model with production networks decreases inflation volatility in Chile and increases it for the UK, as the production network adjustments put more weight on import prices and less on productivity shocks.
\begin{figure}[htbp!]
    \centering
        \caption{Inflation Impulse Responses}
            \label{fig:inflation_irf}
    \begin{minipage}{0.49\textwidth}
    \caption*{\scriptsize (a) Negative Productivity Shock in Non-Tradable Sector}
\includegraphics[width =\textwidth]{./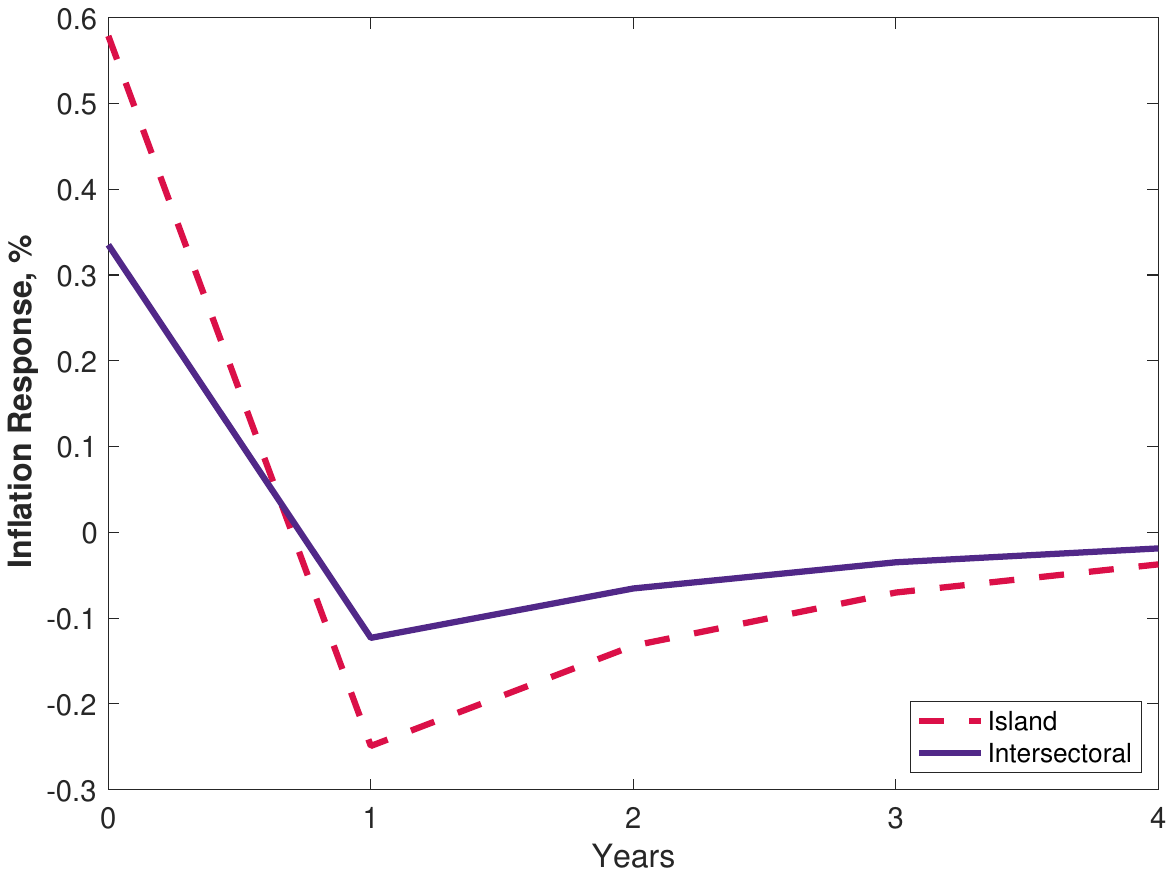}
    \end{minipage}
    \begin{minipage}{0.49\textwidth}
    \caption*{\scriptsize (b) Import Price Shock}
\includegraphics[width =\textwidth]{./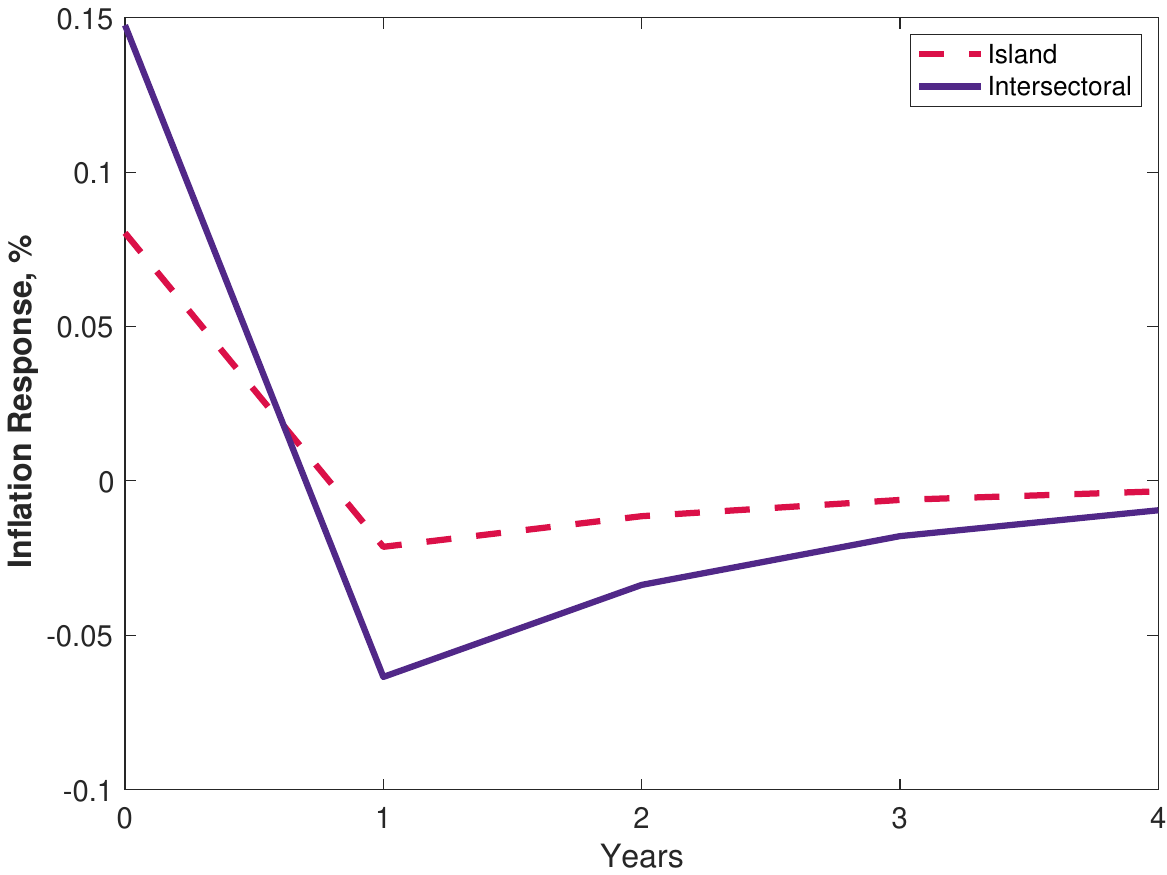}
    \end{minipage}
\Fignote{This figure shows the response of inflation to a one percent negative productivity shock (panel a) and a one percent positive import price shock (panel b). The solid purple line is for the model that considers intersectoral linkages, while the pink dashed line is the model that assumes an island production network structure.}
\end{figure}

\newpage
\section{Shares}\label{app:factor_share_solving}
In this appendix I explicitly solve for $\widehat{\bar{\bm{\Lambda}}}$. Before doing so, I need to define several objects on the consumption and production sides.
\subsection{Consumption}

Let the consumer's share expenditure on good $i \in N\cup M$ be $b_i$ and 
\begin{align*}
    \bar{b}_i&= \frac{P_iC_i}{E}
\end{align*}
Let the price elasticity of demand be $\varepsilon^C_{ik}= \frac{\partial \log C_i}{\partial \log P_k}$ and $\delta_{ik} = 1$ if $k = i$ and zero otherwise. This last element is usually called the Kronecker-delta. Log-differentiating the shares and using the homotheticity assumption, I have 
\begin{align*}
   \widehat{\bar{b}}_i &= \sum\limits_{k\in N \cup M} (\delta_{ik}+ \varepsilon^C_{ik} - \bar{b}_i)\widehat{P}_k = \sum\limits_{k\in N\cup M}\phi^C_{ik}\widehat{P}_k\\
   \mathrm{d}\bar{b}_i &= \bar{b}_i \sum\limits_{k\in N\cup M}\phi^C_{ik}\widehat{P}_k\qquad \text{for }i\in N\cup M
\end{align*}
where $\phi^C_{ik} = (\delta_{ik}+ \varepsilon^C_{ik} - \bar{b}_k)$ represents the elasticity of consumption share on good $i$, $b_i$, in response to a change in the price of good $k$, $P_k$.

Note that we must have 
\begin{align*}
    \sum\limits_{i\in N\cup M}\mathrm{d}\bar{b}_{i} = 0\Longrightarrow \sum\limits_{i\in N\cup M} \bar{b}_i \sum\limits_{k\in N\cup M}\phi^C_{ik}\widehat{P}_k=0,
\end{align*}
for any changes in prices. It thus follows that 
\begin{align*}
    \sum\limits_{i\in N\cup M} \bar{b}_i \phi^C_{ik}=0\quad \text{ for all } k \in N\cup M
\end{align*}
For further reference, it proves useful to define changes in domestic expenditure shares as
\begin{align*}
    \mathrm{d}\bar{\bm{b}}_D&= diag(\bm{\bar{b}}_D)(\bm{\Phi}^C_D\widehat{\bm{P}}_D + \bm{\Phi}^C_M\widehat{\bm{P}}_M),
\end{align*}
where $\bm{\Phi}^C_D$ is an $N\times N$ matrix with typical element $\phi^C_{ij}$ with $i,j\in N$, and $\bm{\Phi}^C_M$ is an $N\times M$ matrix with typical element $\phi^C_{im}$, and $i \in N$, $m \in M$.
\subsection{Production}
On the production side, I must define an operator similar to the above but for each $i \in N$ producer. In addition, the decision of the representative firms also depends on factor prices $\widehat{W}_f$. With this in mind, define the expenditure share of producer $i$ on input $j \in N\cup M$ as 
\begin{align*}
    \Omega_{ij}&= \frac{P_jM_{ij}}{P_iQ_i}
\end{align*}
Define 
\begin{align*}
    \phi^{i}_{jk} = \frac{\partial \log \Omega_{ij}}{\partial \log P_k}\quad \text{ for } i = 1, ..., N;\quad k = 1,..., N\cup M \cup F
\end{align*}
Then 
\begin{align*}
    \widehat{\Omega}_{ij}&= \widehat{P}_j + \sum\limits_{k\in N\cup M}\varepsilon^i_{jk} \widehat{P}_k + \sum\limits_{f\in F}\varepsilon^i_{jf} \widehat{W}_f - \sum\limits_{k\in N\cup M}\Omega_{ik} \widehat{P}_k - \sum\limits_{f\in F}a_{if}\widehat{W}_f\\
    &= \sum\limits_{k\in N\cup M}(\delta_{jk} + \varepsilon^i_{jk} - \Omega_{ik}) \widehat{P}_k + \sum\limits_{f\in F}(\varepsilon^i_{jf} -a_{if})  \widehat{W}_f\\
    \widehat{\Omega}_{ij} &= \sum\limits_{k\in N\cup M} \phi^i_{jk}\widehat{P}_k + \sum\limits_{f\in F}\phi^i_{jf}\widehat{W}_f\\
    \mathrm{d}\Omega_{ij}&= \Omega_{ij} \left(\sum\limits_{k\in N\cup M} \phi^i_{jk}\widehat{P}_k + \sum\limits_{f\in F}\phi^i_{jf}\widehat{W}_f\right)
\end{align*}
where 
\begin{align*}
    \varepsilon^i_{jk} &= \frac{\partial \log M_{ij}}{\partial \log P_k}\\
    \phi^i_{jk} &= \delta_{jk} + \varepsilon^i_{jk} - \Omega_{ik}\\
    \phi^i_{jf} &= \varepsilon^i_{jf} - a_{if}
\end{align*}
represents the elasticity of expenditure share on good $j$ by producer $i$, $\Omega_{ij}$, when there is a change in either good $k \in N \cup M$ or factors $f \in F$.

By a similar logic, I can write the change in expenditure share of producer $i$ on factor $f$ as 
\begin{align*}
    \mathrm{d}a_{if}&= a_{if}\left(\sum\limits_{k\in N\cup M}(\varepsilon^i_{fk} - \Omega_{ik}) \widehat{P}_k + \sum\limits_{f'\in F}(\delta_{ff'}+\varepsilon^i_{ff'} -a_{if'})  \widehat{W}_{f'}\right)\\
    \mathrm{d}a_{if}&=a_{if}\left(\sum\limits_{k\in N\cup M}\phi^i_{fk} \widehat{P}_k + \sum\limits_{f'\in F}\phi^i_{ff'} \widehat{W}_{f'}\right)
\end{align*}
where again 
\begin{align*}
    \varepsilon^i_{fk}&= \frac{\partial \log L_{if}}{\partial \log P_k}\\
    \varepsilon^i_{ff'}&= \frac{\partial \log L_{if}}{\partial \log W_{f'}}\\
    \phi^i_{fk} &= \varepsilon^i_{fk} - \Omega_{ik}\\
    \phi^i_{ff'} &= \delta_{ff'} + \varepsilon^i_{ff'} - a_{if'}
\end{align*}
The first two rows represent the demand elasticity of factor $f$ relative to a change in other good prices (first row) or factors of producton (second row).

The last two rows represent the elasticity of expenditure share of producer $i$ on factor $f$ relative to either good or factor price changes. When these are positive, then expenditure changes increase after a change in other input prices, meaning that the producer substitutes away from those price increases towards factor $f$. If this term is negative, then the expenditure share in factor $f$ declines with a change in other input prices: it means that it has to move resources away from factor $f$ towards those goods that are seeing an increase in their price. This is the complementarity in the production case, and it arises with low elasticities of substitution (low $\varepsilon$'s).

For imported intermediate, I can construct the same as 
\begin{align*}
    \mathrm{d}\Gamma_{im}&= \Gamma_{im}\left(\sum\limits_{k\in N}(\varepsilon^i_{mk} - \Omega_{ik}) \widehat{P}_k + \sum\limits_{m'\in M}(\delta_{mm'}+\varepsilon^i_{mm'} - \Gamma_{im'}) \widehat{P}_{m'} + \sum\limits_{f\in F}(\varepsilon^i_{mf} -a_{if})  \widehat{W}_{f}\right)\\
    \mathrm{d}\Gamma_{im}&= \Gamma_{im}\left(\sum\limits_{k\in N}\phi_{mk}^i \widehat{P}_k + \sum\limits_{m'\in M}\phi_{mm'}^i \widehat{P}_{m'} + \sum\limits_{f\in F}\phi_{mf}^i  \widehat{W}_{f}\right)
\end{align*}
where again 
\begin{align*}
    \varepsilon^i_{mk}&= \frac{\partial \log M_{im}}{\partial \log P_k}\\
    \varepsilon^i_{mm'}&= \frac{\partial \log M_{im}}{\partial \log P_{m'}}\\
    \varepsilon^i_{mf}&= \frac{\partial \log M_{im}}{\partial \log W_{f}}\\
    \phi_{mk}^i&= (\varepsilon^i_{mk} - \Omega_{ik}) \\
    \phi_{mm'}^i&= (\delta_{mm'}+\varepsilon^i_{mm'} - \Gamma_{im'})\\
    \phi_{mf}^i&= (\varepsilon^i_{mf} -a_{if})
\end{align*}

\subsection{Market clearing conditions and substitution patterns}
\paragraph{Goods market clearing conditions.}
Recall that the market clearing conditions for the $N$ domestic goods can be written as 
\begin{align*}
    Q_i &= C_i + X_i + \sum\limits_{j\in N}M_{ji}
\end{align*}
In nominal terms and dividing by expenditure, I have 
\begin{align*}
    \frac{P_iQ_i}{E} &=  \frac{P_iC_i}{E} +  \frac{P_iX_i}{E} +  \sum\limits_{j\in N}\frac{P_iM_{ji}}{P_jQ_j}\frac{P_jQ_j}{E}
\end{align*}

Define \emph{expenditure-based} ratios with a bar i.e. $\bar{\lambda}_i = \frac{P_iQ_i}{E}$. Then, 
\begin{align*}
    \bar{\lambda}_i &= \bar{b}_i + \bar{x}_i +  \sum\limits_{j\in N} \Omega_{ji}\bar{\lambda}_j
\end{align*}
Differentiating this expression, I have 
\begin{align*}
    \mathrm{d}\bar{\lambda}_i &= \mathrm{d}\bar{b}_i + \mathrm{d}\bar{x}_i +  \sum\limits_{j\in N} (\mathrm{d}\Omega_{ji}\bar{\lambda}_j + \Omega_{ji}\mathrm{d}\bar{\lambda}_j)
\end{align*}
Now, recall from shares and making some changes of indices
\begin{align*}
    \mathrm{d}\Omega_{ji}&= \Omega_{ji} \left(\sum\limits_{k\in N\cup M} \phi^j_{ik}\widehat{P}_k + \sum\limits_{f\in F}\phi^j_{if}\widehat{W}_f\right)   
\end{align*}
Then, the third term on the right-hand side can be written as
\begin{align*}
    \sum\limits_{j\in N} \mathrm{d}\Omega_{ji}\bar{\lambda}_j   &= \sum\limits_{j\in N} \Omega_{ji}\bar{\lambda}_j \left(\sum\limits_{k\in N\cup M} \phi^j_{ik}\widehat{P}_k + \sum\limits_{f\in F}\phi^j_{if}\widehat{W}_f\right)   \\
    &= \sum\limits_{j\in N} \Omega_{ji}\bar{\lambda}_j \sum\limits_{k\in N\cup M} \phi^j_{ik}\widehat{P}_k + \sum\limits_{j\in N} \Omega_{ji}\bar{\lambda}_j\sum\limits_{f\in F}\phi^j_{if}\widehat{W}_f \\
    &= \sum\limits_{j\in N} \Omega_{ji}\bar{\lambda}_j \sum\limits_{k\in N} \phi^j_{ik}\widehat{P}_k+ \sum\limits_{j\in N} \Omega_{ji}\bar{\lambda}_j \sum\limits_{m\in M} \phi^j_{im}\widehat{P}_m+ \sum\limits_{j\in N} \Omega_{ji}\bar{\lambda}_j\sum\limits_{f\in F}\phi^j_{if}\widehat{W}_f \\
    &= \sum\limits_{k\in N} \underbrace{\left[ \sum\limits_{j\in N} \Omega_{ji}\bar{\lambda}_j \phi^j_{ik}\right]}_{\equiv \phi_{ik}}\widehat{P}_k+ \sum\limits_{m\in M}\underbrace{\left[\sum\limits_{j\in N} \Omega_{ji}\bar{\lambda}_j  \phi^j_{im}\right]}_{\equiv \phi_{im}}\widehat{P}_m+ \sum\limits_{f\in F}\underbrace{\left[\sum\limits_{j\in N} \Omega_{ji}\bar{\lambda}_j\phi^j_{if}\right]}_{\equiv \phi_{if}}\widehat{W}_f\\
    &= \sum\limits_{k\in N} \phi_{ik}\widehat{P}_k+ \sum\limits_{m\in M}\phi_{im}\widehat{P}_m+ \sum\limits_{f\in F}\phi_{if}\widehat{W}_f
\end{align*}
A useful thing about writing this in this way, is that I can write this in matrix form 
\begin{align*}
    \mathrm{d}\bm{\Omega}^T \bm{\lambda} &= \bm{\Phi}_D \widehat{\bm{P}}_D + \bm{\Phi}_M \widehat{\bm{P}}_M +\bm{\Phi}_F \widehat{\bm{W}}
\end{align*}
where $\bm{\Phi}$ represents \emph{direct substitution matrices}. A version of these substitution matrices appears in \cite{BF19}, although they consider both direct and indirect substitution. Each column represents the changing price and the rows represent where intermediate input demand is going. This only takes into account \emph{first-round effects} and does not consider any input-output linkages beyond the direct exposure (or path of order 1). The next step is to recompute these matrices using the Leontief-inverse. To see this, note that the differentiated form of the market clearing condition write the problem as 
\begin{small}
\begin{align*}
    \mathrm{d}\bm{\bar{\lambda}} &= \bm{\Psi}^T(\mathrm{d}\bm{\bar{b}}_D + \mathrm{d}\bm{\bar{x}} + \bm{\Phi}_D \widehat{\bm{P}}_D + \bm{\Phi}_M \widehat{\bm{P}}_M +\bm{\Phi}_F \widehat{\bm{W}})\\
    &= \bm{\Psi}^T\left(diag(\bm{\bar{b}}_D)(\bm{\Phi}^C_D\widehat{\bm{P}}_D + \bm{\Phi}^C_M\widehat{\bm{P}}_M) + \mathrm{d}\bm{\bar{x}} + \bm{\Phi}_D\widehat{\bm{P}}_D + \bm{\Phi}_M \widehat{\bm{P}}_M +\bm{\Phi}_F \widehat{\bm{W}} \right)\\
    \mathrm{d}\bm{\bar{\lambda}}  &= \bm{\Psi}^T\left(diag(\bm{\bar{b}}_D) \bm{\Phi}^{C}_D+ \bm{\Phi}_D \right) \widehat{\bm{P}}_D + \left(diag(\bm{\bar{b}}_D) \bm{\Phi}^{C}_M + \bm{\Phi}_M\right) \widehat{\bm{P}}_M + \bm{\Psi}^T\mathrm{d}\bm{\bar{x}} +\bm{\Psi}^T\bm{\Phi}_F \widehat{\bm{W}}
\end{align*} 
\end{small}
\paragraph{Factor shares changes.}
We need $F$ more equations, which come from the factor market clearing conditions 
\begin{align*}
    \bar{L}_f &= \sum\limits_{i \in N}L_{if}
\end{align*}
Write this in share form 
\begin{align*}
    \bar{\Lambda}_f = \frac{W_f\bar{L}_f}{E} &= \sum\limits_{i \in N}\frac{W_fL_{if}}{P_iQ_i}\frac{P_iQ_i}{E} = \sum\limits_{i \in N}a_{if}\bar{\lambda}_i
\end{align*}
In differential form, 
\begin{align*}
   \mathrm{d}\Lambda_f &= \sum\limits_{i \in N}\mathrm{d}a_{if}\bar{\lambda}_i  + \sum\limits_{i \in N}a_{if}\mathrm{d}\bar{\lambda}_i
\end{align*}
Using the expression for changes in factor usage at the producer level, I have 
\begin{align*}
    \mathrm{d}\Lambda_f &= \sum\limits_{i \in N}\left[a_{if}\left(\sum\limits_{k\in N\cup M}\phi^i_{fk} \widehat{P}_k + \sum\limits_{f'\in F}\phi^i_{ff'} \widehat{W}_{f'}\right)\right]\bar{\lambda}_i  + \sum\limits_{i \in N}a_{if}\mathrm{d}\bar{\lambda}_i\\
    \mathrm{d}\Lambda_f &= \sum\limits_{i \in N}\left[a_{if}\left(\sum\limits_{k\in N}\phi^i_{fk} \widehat{P}_k +\sum\limits_{m\in M}\phi^i_{fm} \widehat{P}_m + \sum\limits_{f'\in F}\phi^i_{ff'} \widehat{W}_{f'}\right)\right]\bar{\lambda}_i  + \sum\limits_{i \in N}a_{if}\mathrm{d}\bar{\lambda}_i
\end{align*}
Taking each of the terms on the right-hand side, I can write
\begin{align*}
    \sum\limits_{i \in N}a_{if}\sum\limits_{k\in N}\phi^i_{fk} \widehat{P}_k \bar{\lambda}_i &= \sum\limits_{k\in N}\underbrace{\left(\sum\limits_{i \in N}a_{if}\phi^i_{fk}  \bar{\lambda}_i\right)}_{\equiv \phi_{fk}}\widehat{P}_k = \sum\limits_{k\in N}\phi_{fk}\widehat{P}_k\\
    \sum\limits_{i \in N}a_{if}\sum\limits_{f'\in F}\phi^i_{ff'} \widehat{W}_{f'} \bar{\lambda}_i &=  \sum\limits_{f'\in F}\underbrace{\left(\sum\limits_{i \in N}a_{if}\bar{\lambda}_i\phi^i_{ff'}\right)}_{\equiv \phi_{ff'}} \widehat{W}_{f'} = \sum\limits_{f'\in F}\phi_{ff'}\widehat{W}_{f'} \\
    \sum\limits_{i \in N}a_{if}\sum\limits_{m\in M}\phi^i_{fm} \widehat{P}_m \bar{\lambda}_i&= \sum\limits_{m\in M}\underbrace{\left(\sum\limits_{i \in N}a_{if}\bar{\lambda}_i\phi^i_{fm}\right)}_{\equiv \phi_{fm}} \widehat{P}_m = \sum\limits_{m\in M}\phi_{fm}\widehat{P}_m
\end{align*}
Replacing this into the differential form for the factor share 
\begin{align*}
    \mathrm{d}\Lambda_f  &=   \sum\limits_{k\in N}\phi_{fk}\widehat{P}_k +  \sum\limits_{f'\in F}\phi_{ff'}\widehat{W}_{f'} + \sum\limits_{m\in M}\phi_{fm}\widehat{P}_m + \sum\limits_{i \in N}a_{if}\mathrm{d}\bar{\lambda}_i
\end{align*}
In matrix form,
\begin{align*}
    \mathrm{d}\bm{\bar{\Lambda}}&= \bm{\Phi}^F_{D}\widehat{\bm{P}}_D + \bm{\Phi}^{F}_F\widehat{\bm{W}} + \bm{\Phi}^{F}_M\widehat{\bm{P}}_M + \bm{A}^T \mathrm{d}\bm{\bar{\lambda}}
\end{align*}

\paragraph{Export share changes.} Since export demand is exogenous, I can write 
\begin{align*}
    \bar{x}_i &= \frac{P_iX_i}{\mathcal{M}}\Longrightarrow \mathrm{d}\bar{x}_i = \bar{x}_i(\widehat{P}_i + \widehat{X}_i - \widehat{\mathcal{M}}),
\end{align*}
which staking into a vector form 
\begin{align}
    \mathrm{d}\bm{\bar{x}} &= diag(\bm{\bar{x}})(\widehat{\bm{P}}_D + \widehat{\bm{X}} - \bm{1}_N\widehat{\mathcal{M}})
\end{align}
\subsection{Solving the system}
\begin{footnotesize}
\begin{align*}
    \mathrm{d}\bm{\bar{\Lambda}}&= \bm{\Phi}^F_{D}\widehat{\bm{P}}_D + \bm{\Phi}^{F}_F\widehat{\bm{W}} + \bm{\Phi}^{F}_M\widehat{\bm{P}}_M + \bm{A}^T \mathrm{d}\bm{\bar{\lambda}} \qquad (F\text{ equations}, F + N + F + N \text{ unknowns})\\
     \mathrm{d}\bm{\bar{\lambda}}  &= \bm{\Psi}^T\left(diag(\bm{\bar{b}}_D) \bm{\Phi}^{C}_D+ \bm{\Phi}_D \right) \widehat{\bm{P}}_D + \left(diag(\bm{\bar{b}}_D) \bm{\Phi}^{C}_M + \bm{\Phi}_M\right) \widehat{\bm{P}}_M + \bm{\Psi}^T\mathrm{d}\bm{\bar{x}} +\bm{\Psi}^T\bm{\Phi}_F \widehat{\bm{W}}\qquad (N \text{ equations}, N \text{ add. unknowns})\\
    \widehat{\bm{P}}_D &= -\bm{\Psi}\widehat{\bm{Z}} + \bm{\Psi}\bm{A}\widehat{\bm{W}} + \bm{\Psi}\bm{\Gamma}\widehat{\bm{P}}_M \qquad (N \text{ equations, no new unknowns})\\
    \mathrm{d}\bm{\bar{\Lambda}}&= diag(\bm{\bar{\Lambda}})\left(\widehat{\bm{W}} + \widehat{\bm{L}} - \bm{1}_F\widehat{\mathcal{M}} \right)\qquad  (F \text{ equations, no new unknowns})\\
    \mathrm{d}\bm{\bar{x}} &= diag(\bm{\bar{x}})(\widehat{\bm{P}}_D + \widehat{\bm{X}} - \bm{1}_N\widehat{\mathcal{M}})\qquad (N \text{ equations, no new unknowns})
\end{align*}
\end{footnotesize}
This is a system of $2F + 3N$ unknowns: ($\mathrm{d}\bm{\bar{\Lambda}}, \widehat{\bm{W}}, \mathrm{d}\bm{\bar{\lambda}}, \widehat{\bm{P}}_D, \mathrm{d}\bm{\bar{x}}) $ on the same number of equations, and thus it pins down all necessary objects.

Note that the distribution of factor shares ($\mathrm{d}\bm{\bar{\Lambda}}$) and Domar weights ($\mathrm{d}\bm{\bar{\lambda}}$) changes crucially depends on the substitution matrices ($\bm{\Phi}'s$ matrices). As a result, to the extent that substitution patterns are encapsulated in these matrices, they affect aggregate inflation in the small open economy via changing factor shares. It is in this sense that elasticities of substitution matters for inflation in this model, something that does not hold in the closed economy as these terms cancels out, to a first-order.
\end{document}